\def\BibTeX{{\rm B\kern-.05em{\sc i\kern-.025em b}\kern-.08em
    T\kern-.1667em\lower.7ex\hbox{E}\kern-.125emX}}
\newcommand{\ourtitle}{Investigating Membership Inference Attacks\\under Data Dependencies}
\newcounter{margin} 
\definecolor{scolor}{rgb}{1.0, 0.08, 0.58}
\definecolor{lcolor}{rgb}{0.0, 0.27, 0.13}
\definecolor{fcolor}{rgb}{0.1, 0.1, 0.44}
\definecolor{tcolor}{rgb}{1.0, 0.5, 0}
\definecolor{icolor}{rgb}{1.0, 0, 0}
\definecolor{ucolor}{rgb}{0, .5, 0}
\definecolor{bcolor}{rgb}{0.5, 0, 0.5}
\newcommand{\TPR}{\text{TPR}}
\newcommand{\FPR}{\text{FPR}}
\newcommand{\ExpAltt}{\ensuremath{\texttt{Exp}_{\texttt{EX}}'}}
\newcommand{\ExpAlttt}{\ensuremath{\texttt{Exp}_{\texttt{EX}}''}}
\newcommand{\ExpIID}{\texttt{Exp}_{\texttt{IID}}}
\newcommand{\ExpEX}{\texttt{Exp}_{\texttt{EX}}}
\newcommand{\ExpNoIID}{\texttt{Exp}_{\texttt{MM}}}
\newcommand{\ExpSA}{\texttt{Exp}_{\texttt{STR}}}
\newcommand{\Att}{\texttt{Att}}
\newcommand{\Adv}{\texttt{Adv}}
\newcommand{\AdvAlt}{\texttt{Adv}_{\texttt{EX}}}
\newcommand{\AdvAltt}{\texttt{Adv}_{\texttt{EX}}'}
\newcommand{\dist}{\mathcal{D}}
\DeclareMathAlphabet\mathbfcal{OMS}{cmsy}{b}{n}
\newcommand{\distn}{\mathbfcal{D}}
\newcommand{\yeom}{threshold attack}
\newcommand{\optimal}{optimal threshold attack}
\newcommand{\shokri}{shadow model attack}
\newcommand{\average}{average threshold attack}
\newcommand{\genExp}{MM membership experiment}
\newcommand{\adult}{\texttt{adult}}
\newcommand{\compas}{\texttt{compas}}
\newcommand{\heart}{\texttt{heart}}
\newcommand{\texas}{\texttt{texas}}
\newcommand{\census}{\texttt{census}}
\newcommand{\students}{\texttt{students}}
\newtheorem{theorem}{Theorem}[section]
\theoremstyle{plain}
\theoremstyle{definition}
\newtheorem{defn}{Definition}[section]
\begin{document}

\title{\ourtitle}

\makeatletter
\newcommand{\linebreakand}{%
  \end{@IEEEauthorhalign}
  \hfill\mbox{}\par
  \mbox{}\hfill\begin{@IEEEauthorhalign}
}
\makeatother

\author{\IEEEauthorblockN{Thomas Humphries}
\IEEEauthorblockA{\textit{University of Waterloo} \\
Waterloo, Canada\\
thomas.humphries@uwaterloo.ca}
\and
\IEEEauthorblockN{Simon Oya}
\IEEEauthorblockA{\textit{University of Waterloo} \\
Waterloo, Canada\\
simon.oya@uwaterloo.ca}
\and
\IEEEauthorblockN{Lindsey Tulloch}
\IEEEauthorblockA{\textit{University of Waterloo} \\
Waterloo, Canada\\
lindsey.tulloch@uwaterloo.ca}
\and
\IEEEauthorblockN{Matthew Rafuse}
\IEEEauthorblockA{\textit{University of Waterloo} \\
Waterloo, Canada\\
matthew.rafuse@uwaterloo.ca}
\linebreakand
\IEEEauthorblockN{Ian Goldberg}
\IEEEauthorblockA{\textit{University of Waterloo} \\
Waterloo, Canada\\
iang@uwaterloo.ca}
\and
\IEEEauthorblockN{Urs Hengartner}
\IEEEauthorblockA{\textit{University of Waterloo} \\
Waterloo, Canada\\
urs.hengartner@uwaterloo.ca}
\and
\IEEEauthorblockN{Florian Kerschbaum}
\IEEEauthorblockA{\textit{University of Waterloo} \\
Waterloo, Canada\\
florian.kerschbaum@uwaterloo.ca}
}

\maketitle
\begin{abstract}
    Training machine learning models on privacy-sensitive data has become a popular practice, driving innovation in ever-expanding fields. 
    This has opened the door to new attacks that can have serious privacy implications.
    One such attack, the Membership Inference Attack (MIA), exposes whether or not a particular data point was used to train a model.
    A growing body of literature uses Differentially Private (DP) training algorithms as a defence against such attacks.
    However, these works evaluate the defence under the restrictive assumption that all members of the training set, as well as non-members, are independent and identically distributed.
    This assumption does not hold for many real-world use cases in the literature.
    Motivated by this, we evaluate membership inference with statistical dependencies among samples and explain why DP does not provide meaningful protection (the privacy parameter $\epsilon$ scales with the training set size $n$) in this more general case.
    We conduct a series of empirical evaluations with off-the-shelf MIAs using training sets built from real-world data showing different types of dependencies among samples.
    Our results reveal that training set dependencies can severely increase the performance of MIAs, and therefore assuming that data samples are statistically independent can significantly underestimate the performance of MIAs.
\end{abstract}

\begin{IEEEkeywords}
Membership Inference Attacks, Differential Privacy
\end{IEEEkeywords}

\section{Introduction}
Machine learning (ML) is increasingly used to make predictions on privacy-sensitive data. 
In recent years, large tech companies such as Google and Amazon have begun to offer machine learning as a service to the general public through their cloud platforms. 
Although these systems can yield new and interesting insights in a variety of fields, machine learning models trained on sensitive data also present a lucrative attack surface for adversaries. 
In a Membership Inference Attack (MIA)~\cite{shokri2016membership, yeom2017privacy},
an adversary attempts to identify the data that was used to train an ML model (i.e., its members). 
Successful MIAs violate the privacy of individuals and pose a significant threat to unprotected ML models~\cite{shokri2016membership}.

Differentially private (DP) mechanisms can be applied to an ML model during the learning process \cite{chaudhuri_differentially_nodate,Abadi_2016,beaulieujones2018privacypreserving, shokri2016membership, yeom2017privacy, yu2019differentially} to limit the effect that a single data point can have on the model's output. 
Training ML models with DP mechanisms is becoming more common with popular libraries such as TensorFlow~\cite{tensorflow} and PyTorch~\cite{pytorch} offering several DP training algorithms. 

Previous work has provided experimental and theoretical evidence of DP training as a defence against MIAs~\cite{li2013membership,yeom2017privacy,jayaraman_evaluating_2019,shokri2016membership}.
Yeom et al.~\cite{yeom2017privacy} prove that DP training algorithms ensure a \emph{theoretical upper bound} on the privacy leakage caused by MIAs.
This bound, later improved by Erlingsson et al.~\cite{jayaraman2020revisiting}, offers strong protection for high DP levels (corresponding to small values of the DP parameter $\epsilon$), but quickly weakens for large $\epsilon$.
Despite this, empirical evaluations of MIAs against DP-trained models~\cite{jayaraman_evaluating_2019, jayaraman2020revisiting} suggest that 
large values of $\epsilon$, theoretically regarded as low-privacy settings, provide sufficient protection against MIAs in the wild.
Motivated by these findings, Murakonda and Shokri~\cite{murakonda2020ml} developed a tool for tuning $\epsilon$ based solely on the empirical performance of MIAs, thus increasing the model's utility at the expense of  theoretical privacy guarantees.
In short, this body of literature suggests that applying DP mechanisms to ML models, even in severely weakened privacy regimes, provides sufficient protection against known MIAs.

These works on membership inference all follow a similar evaluation methodology where members and non-members are sampled independently from the same distribution.
However, this underlying assumption is not always representative of real data~\cite{Mohammad2020, arjovsky2021distribution, pmlr-v97-recht19a, Martensson2020, nagarajan2021understanding, pmlr-v119-sagawa20a, sagawa2020distributionally}.
Furthermore, prior work has shown that data dependencies can lead to an overestimation of the privacy protection provided by DP~\cite{tschantz_sok_2020, kifer_no_2011, pufferfish, li2013membership, liu2016dependence}.
Motivated by this, we conduct the first study of MIA performance on ML models trained with DP-SGD~\cite{Abadi_2016} \emph{under data dependencies}.

We begin by studying the methodology from previous work on MIAs and identify the assumptions that underpin their theoretical findings~\cite{yeom2017privacy,jayaraman_evaluating_2019,nasr2021adversary}.
Since there are no publicly known instances of MIAs against ML models in the wild, researchers are limited to simulating attacks using theoretical games between the model owner and attacker called \emph{membership experiments}~\cite{yeom2017privacy}.
We show that previous membership experiments assume independence between training samples, which might not represent many membership inference cases in practice. 
To solve this, we propose a new membership experiment that loosens the restrictive independence assumption.
We demonstrate that previous DP-based bounds on the adversary's success do not hold in this more general setting.
We emphasize that this does not mean that DP is broken or flawed.
Rather, the guarantees of DP mechanisms no longer apply when defining membership inference with data dependencies.
Intuitively, this is because DP mechanisms limit the effect that a \emph{single} sample has on the output.
However, when training samples have dependencies, related samples can leak additional information about this single sample.

Prior work has shown the importance of studying MIA performance both theoretically and empirically~\cite{jayaraman_evaluating_2019, jayaraman2020revisiting}.
Thus, we also study the effect of data dependencies on the \emph{empirical performance} of MIAs under different instantiations of our membership experiment.
Using six real-world datasets, we evaluate the performance of off-the-shelf attacks.
These attacks have black-box query access to the target model and are given no additional background information about the data dependencies we consider.
Using these attacks, we investigate three different types of data dependencies.
First, to study the extent to which data dependencies can affect MIAs, we artificially split a dataset into members and non-members using a clustering algorithm.
Our results show that data dependencies can lead to perfect membership inference even in the presence of DP training.
Second, we investigate the effects of an explicit bias in an attribute of the training set.
In particular, we create a gender and an education bias in the training set, and see that MIA performance typically increases as the training set bias becomes more pronounced. 
Finally, we consider MIA performance on various real-world examples of data dependencies that occur during data collection.
For example, we study the effect of all members being from a specific health region (or hospital) and non-members from all other regions.
We also study the performance of MIAs when members and non-members come from the same source (the US Census) but are curated by different researchers.

In all of our evaluations with statistical dependencies among training samples we find a significant increase in MIA performance over the related work where all samples are Independent and Identically Distributed (IID)~\cite{shokri2016membership,jayaraman_evaluating_2019, jayaraman2020revisiting, yeom2017privacy,jagielski2020auditing, nasr2021adversary}.
This shows that the assumption of data independence among training samples underestimates the attack performance.
Our empirical evaluation yields attack performance greater than the bounds of DP, which confirms that the currently known DP bounds do not apply once we consider dependent data, and thus DP mechanisms are not a cure-all solution for membership inference.
This highlights the importance of considering data dependencies in future MIA evaluations and urges new research to develop better defences, and more realistic theoretical bounds on MIA performance.

\section{Preliminaries}
In this section, we summarize the concepts related to membership inference attacks and differentially private machine learning that are most relevant to our work.
For reference, our notation is summarized in Table~\ref{table:notation}.

\begin{table}
	\centering
	\caption{Notation}\label{table:notation}
	\begin{tabular}{r p{6.5cm}}
		\textbf{Notation} & \textbf{Description}\\ \hline
		$z=(x,y)$ & Data point with feature vector $x$ and label $y$\\
		$\mathcal{Z}$ & Space of all possible data points\\
		$S$ & Training set of size $n$; $S\in\mathcal{Z}^n$\\
		$\dist$ & Distribution of a data point (over $\mathcal{Z}$)\\	
		$\distn$ & Joint distribution of $n$ data points (over $\mathcal{Z}^n$)\\
		$[K]$ & Set of integers from 1 to $K$\\
		$\mathcal{A}$ & Space of all possible machine learning models\\
		$A$ & Learning algorithm $A:\mathcal{Z}^n\to\mathcal{A}$\\
		$a$ & Instance of a trained model ($a\in\mathcal{A}$)\\ 
		\hline
		$\Att$ & Membership inference attack, outputs a bit\\ 
		$\Adv$ & Membership advantage, $\Adv\in[0,1]$\\
		\hline
	\end{tabular} 
\end{table}
We use $z=(x,y)$ to denote an element or data sample, where $x$ is its feature vector and $y$ is its class or label.
Let $\mathcal{Z}$ be the element space, i.e., $z\in\mathcal{Z}$.
We use $S\in\mathcal{Z}^n$ to denote a training set that contains $n$ elements $z\in\mathcal{Z}$.
Let $\dist$ be a probability distribution over $\mathcal{Z}$; $z\sim\dist$ means that $z$ is randomly sampled from $\dist$, and $S\sim\dist^n$ means that $S$ consists of $n$ independent samples from $\dist$.
We use $\distn$ to denote a probability distribution over $\mathcal{Z}^n$, i.e., a \emph{joint} distribution for all samples in a dataset; $S\sim\distn$ means the dataset $S$ is sampled from $\distn$.
We use $[K]$ to denote the set of integers from 1 to $K$.
A training algorithm $A$ is a (possibly randomized) function that takes a training set $S\in\mathcal{Z}^n$ and outputs a trained model $a\in\mathcal{A}$, where $\mathcal{A}$ is the space of trained models.
We use $a=A(S)$ to denote that $a$ is the model that results from applying the training algorithm $A$ to the training set $S$.
The goal of the trained model $a$ is to solve a classification task; i.e., assign a label $y$ to a feature vector $x$.
In this work, we focus on neural networks trained with DP-SGD~\cite{Abadi_2016}, which are a popular model choice for solving classification problems in machine learning.
However, our theoretical findings are generic and apply to other models as well.

\subsection{Membership Inference Attacks}
\label{sec:mia}
Though useful for solving classification tasks, machine learning models are subject to various privacy attacks. 
In this work we focus on the Membership Inference Attack (MIA), whose goal is to determine whether or not a specific data point $z$ was included in the training set of a target model $a$.
This attack is particularly dangerous when the model is trained on sensitive data, where an individual's inclusion or exclusion in the dataset could reveal sensitive or compromising information to an attacker. 
MIAs are related to property inference attacks~\cite{Ganju_2018,ateniese2013hacking,mahloujifar2022property}, but are fundamentally different.
In a Property Inference Attack (PIA), the goal of the attacker is to infer some property of the training set that the model producer did not intend to share (i.e., a property common to all training set samples), whereas the goal of an MIA is to infer whether or not a particular sample was in the training set. 
We discuss the relationship to PIAs further in Section~\ref{sec:disc}.

We use $\Att$ to denote a membership inference attack.
Typically, an MIA receives the target model $a$ and a data point $z$, and knows the training mechanism $A$ and some statistical information about the training data (e.g., $\dist$ or $\distn$).
The attack outputs a bit, which is $\Att=0$ when it decides that $z$ is a member, and $\Att=1$ when it decides it is a non-member.
Two of the most well-known membership inference attacks are the proposals by Shokri et al.~\cite{shokri2016membership} and Yeom et al.~\cite{yeom2017privacy}.
The \emph{\shokri}~by Shokri et al.~\cite{shokri2016membership} uses public data to train a set of shadow models, designed to mimic the target model's functionality.
Then, it trains an additional attack model to identify the membership status of a sample using outputs from the shadow models.
The \emph{\yeom}~by Yeom et al.~\cite{yeom2017privacy} assumes that the adversary has access to the loss function of the target model as well as the distribution of the loss on the private training data.
Given a data sample $z$ and the model $a$, the attack queries the model to obtain the loss of $z$, and decides that $z$ is a member if its loss is below a threshold computed from the distribution information.
A weaker variant of this attack assumes the adversary only knows the expected loss of the training set, and uses this value as decision threshold.

A variety of metrics can be used to measure the success of an MIA.
The True Positive Rate ($\TPR$) is the probability that the attack correctly identifies a member as such, and the False Positive Rate ($\FPR$) is the probability of incorrectly guessing that a non-member is a member.
A popular metric to measure the success of an MIA, which we use in this work, is the \emph{membership advantage}, which Yeom et al.~\cite{yeom2017privacy} define as $\Adv\doteq \TPR-\FPR$.
This metric is $0$ when the attack randomly decides the membership of $z$, and is $1$ when the attack always guesses the membership of $z$ correctly.

\subsection{Differential Privacy in Machine Learning}

Differential Privacy (DP), a privacy notion introduced by Dwork et al.~\cite{dwork_differential_2006}, has become the gold standard in database privacy and is a popular privacy notion in machine learning:

\begin{defn}[($\epsilon,\delta$)-DP]
\label{defn:dp}
 A training algorithm $A$ provides ($\epsilon,\delta$)-DP iff, for any two neighbouring datasets $S, S'\in \mathcal{Z}^n$ (i.e., $S$ and $S'$ differ by a single entry), and all possible subsets of the space of trained models $\mathcal{R}\subseteq\mathcal{A}$,
\begin{equation} \label{eq:dp}
 \Pr(A(S)\in\mathcal{R})\leq\Pr(A(S')\in\mathcal{R})\cdot e^\epsilon + \delta\,.
\end{equation}
\end{defn}
The parameter $\epsilon$ captures the degree of leakage of the mechanism $A$~\cite{dwork_differential_2006}.
Small values of $\epsilon$ indicate that a model trained with $S$ is indistinguishable from a model trained with $S'$ which, intuitively, makes it difficult to infer whether or not an element $z$ is in the training set.
The parameter $\delta$ makes it easier to satisfy the DP constraint by allowing a small chance of failure in the privacy guarantee.
It is typical to choose $\delta<1/n$, where $n$ is the number of elements in the dataset~\cite{dwork2014algorithmic}.

There are different approaches to developing a differentially private training algorithm $A$.
In the case of neural networks, the differentially private stochastic gradient descent technique by Abadi et al.~\cite{Abadi_2016} is widely used.
This technique involves clipping the gradients used for updating the network's weights during training time, and adding Gaussian noise to the average of the gradients.

\section{Membership Experiments}
\label{sec:theo}

Evaluating the performance of MIAs is crucial to understanding how dangerous these attacks are in practice and how to protect against them.
However, since there are no publicly known cases of MIAs against ML models in practice, researchers typically rely on \emph{membership experiments} to evaluate MIAs. 
A membership experiment is a theoretical game between the data owner and the adversary, where the adversary tries to guess the membership of a data sample.
The game specifies how the data owner generates the training data and which information is available to the attacker.
A membership experiment provides a \emph{theoretical definition of membership inference}, specifying the rules one has to follow when evaluating MIAs empirically, and allowing researchers to prove DP-based theoretical guarantees that hold within this controlled environment.

In this section, we review two membership experiments that appear in related work~\cite{yeom2017privacy,nasr2021adversary}.
We argue that these experiments make unrealistic assumptions and thus might not be representative of what one could expect in a real-world attack.
We then propose a new membership experiment that relaxes these assumptions.
We call this experiment the Mixture Model (MM) membership experiment.
We argue that the MM experiment better represents an MIA that could occur in practice, and thus it provides a more realistic definition of what membership inference is.

\subsection{Strong-Adversary Membership Experiment}

Differential privacy implies, by Definition~\ref{defn:dp}, that an adversary observing a model $a$ cannot easily distinguish whether it has been trained with $S=\tilde{S}\cup\{z\}$ or $S'=\tilde{S}\cup \{z'\}$, where $\tilde{S}\in\mathcal{Z}^{n-1}$, and $z,z'\in\mathcal{Z}$.
This naturally leads to the \emph{strong-adversary membership experiment} that we describe in Algorithm~\ref{alg:nasr} ($\ExpSA$).
This experiment has five inputs: an attack $\Att$, a training algorithm $A$, a set of $n-1$ samples $\tilde{S}$, and two data samples $z$ and $z'$.
The data owner flips a bit $b$ to decide whether to train a model with $S=\tilde{S}\cup\{z\}$ ($b=0$) or with $S'=\tilde{S}\cup\{z'\}$ ($b=1$).
The adversary receives the trained model $a$, the set $\tilde{S}$, and data samples $z$ and $z'$, and also knows the training algorithm $A$.
The adversary succeeds ($\ExpSA=1$) if it correctly learns whether $z$ or $z'$ was a member of the training set of $a$ (i.e., bit $b$).
Nasr et al.~\cite{nasr2021adversary} use this membership experiment, formulated as a privacy game, to prove that the $\epsilon$-DP guarantees of current training algorithms are tight.\footnote{Nasr et al.~\cite{nasr2021adversary} use the so-called \emph{unbounded} DP notion, where one of the neighbouring datasets has one additional element (i.e., $S$ vs.~$S'=S\cup\{z\}$). We stick to the so-called \emph{bounded} DP notion ($S$ and $S'$ have the same size) here since this is the privacy notion followed by Yeom et al. in their membership experiment~\cite{yeom2017privacy} }

\begin{algorithm}
	\begin{algorithmic}[1]
	\Procedure{$\ExpSA$}{$\Att, A, \tilde{S}, z, z'$}
	\State Choose $b\sim\{0,1\}$ uniformly at random;
	\If{$b=0$}
		\State Train $a=A(\tilde{S}\cup\{z\})$;
	\Else
		\State Train $a=A(\tilde{S}\cup\{z'\})$;
	\EndIf
	\State Return 1 if $\Att(a, z, z', \tilde{S}, A)=b$; else 0.	
	\EndProcedure
	\end{algorithmic}
	\caption{Strong-Adversary Membership Experiment~\cite{nasr2021adversary}}
	\label{alg:nasr}
\end{algorithm}

The membership advantage in $\ExpSA$ is
\begin{align*}
 \Adv(\Att, A, \tilde{S}, z, z')=\Pr(\Att=0|b=0)-\Pr(\Att=0|b=1)\,.
\end{align*}
Here, the positive summand is the $\TPR$ and the negative one is the $\FPR$ (if we consider that $b=0$ is a positive).
Note that these terms also depend on the input variables ($\Att, A, \tilde{S}, z, z'$) but we have omitted this dependence in the notation for simplicity.
Next, we prove that an $(\epsilon,\delta)$-DP training algorithm provides the following upper bound on the membership advantage:
\begin{theorem}[New Membership Advantage Bound]
\label{theo:new}
Let $A$ be an $(\epsilon,\delta)$-DP training algorithm. 
Then, for all attacks $\Att$, sets $\tilde{S}$, and data samples $z$ and $z'$, the membership advantage in $\ExpSA$ satisfies
\begin{equation} \label{eq:bound_strongadv}
 \Adv(\Att, A, \tilde{S}, z, z')\leq (e^\epsilon-1+2\delta)/(e^\epsilon+1)\,.
\end{equation}
\end{theorem}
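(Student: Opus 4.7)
The plan is to reduce the statement to the definition of $(\epsilon,\delta)$-DP applied in both directions, and then solve a small linear inequality for the advantage.

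First I would set up convenient shorthand. Since $\tilde{S}$, $z$, $z'$ and $\Att$ are fixed by the statement, let $p_0 \doteq \Pr(\Att=0\mid b=0)$ and $p_1 \doteq \Pr(\Att=0\mid b=1)$, so that $\Adv = p_0 - p_1$. The key observation is that the bit $b$ only enters through which of the neighbouring training sets $S=\tilde{S}\cup\{z\}$ or $S'=\tilde{S}\cup\{z'\}$ is fed to $A$, and these two datasets differ in exactly one entry. Therefore $p_0$ and $p_1$ are precisely $\Pr(A(S)\in\mathcal{R})$ and $\Pr(A(S')\in\mathcal{R})$ respectively, where $\mathcal{R}\subseteq\mathcal{A}$ is the (deterministic, given $\Att$ is viewed as a function of $a$) set of models on which the attack outputs $0$.

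Next I would invoke Definition~\ref{defn:dp} twice. Applied to $\mathcal{R}$ in the direction $S\to S'$ it gives $p_0 \le e^\epsilon p_1 + \delta$, which rearranges to
\begin{equation*}
 p_0 - p_1 \le (e^\epsilon - 1)\,p_1 + \delta .
\end{equation*}
Applied to the \emph{complement} $\mathcal{A}\setminus\mathcal{R}$ in the direction $S'\to S$, it gives $1-p_1 \le e^\epsilon(1-p_0)+\delta$, which after rearranging yields
\begin{equation*}
 p_0 - p_1 \le (e^\epsilon - 1)(1-p_0) + \delta .
\end{equation*}
This symmetric use of DP (on both the event and its complement) is the only non-mechanical step; recognising that one inequality alone would leave a factor depending on $p_1$ is the main mini-obstacle.

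Finally I would add the two inequalities to eliminate the explicit dependence on $p_0$ and $p_1$:
\begin{equation*}
 2(p_0-p_1) \le (e^\epsilon-1)\bigl(1-(p_0-p_1)\bigr) + 2\delta .
\end{equation*}
Writing $\alpha = p_0-p_1 = \Adv$ and collecting terms gives $\alpha(e^\epsilon+1) \le e^\epsilon - 1 + 2\delta$, which is exactly the bound~\eqref{eq:bound_strongadv}. I would close by noting that the argument is completely agnostic to the choice of $\tilde{S}$, $z$, $z'$ and $\Att$, so the bound holds uniformly as claimed. No additional assumptions about the distribution of the data are invoked, which is what makes this bound applicable to $\ExpSA$ but, as the paper will later argue, too restrictive to transfer to experiments with dependent samples.
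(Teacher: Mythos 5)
Your proof is correct and follows essentially the same route as the paper's: the two inequalities you obtain by applying Definition~\ref{defn:dp} to the acceptance region $\mathcal{R}$ and to its complement (in opposite directions) are exactly the Kairouz et al.\ tradeoff inequalities $\FPR+e^\epsilon(1-\TPR)\geq 1-\delta$ and $(1-\TPR)+e^\epsilon\FPR\geq 1-\delta$ that the paper cites, and the subsequent step of adding them and solving for $\Adv=\TPR-\FPR$ is identical. The only difference is that you derive these inequalities from first principles rather than citing them, which makes the argument self-contained.
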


\begin{proof}

Kairouz et al.~\cite{kairouz2017composition} show that, for an adversary that wants to distinguish between two neighbouring inputs of a DP mechanism, the following bounds on the adversary's TPR and FPR hold: 
\begin{equation*}
	\FPR+e^\epsilon\cdot(1-\TPR)\geq 1-\delta\,,\quad
	(1-\TPR)+e^\epsilon \cdot \FPR\geq 1-\delta\,.
\end{equation*}
These expressions follow from \eqref{eq:dp} and trivially apply to the TPR and FPR in $\ExpSA$.
Adding these expressions together we get
\begin{equation*}
	(1-\TPR+\FPR)(1+e^\epsilon)\geq 2(1-\delta)\,.
\end{equation*}
Applying the definition of membership advantage ($\Adv\doteq\TPR-\FPR$) and leaving $\Adv$ in one side of the inequality yields the desired upper bound.
\end{proof}

For values of $\delta$ and $\epsilon$ close to zero, the bound above is close to 0.
This means that DP is potentially a very strong privacy guarantee since it ensures that the advantage of an adversary that \emph{knows all training set samples except one} (i.e., $\tilde{S}$), and has two possible candidates for the unknown sample ($z$ and $z'$) will always be below a certain value.
However, as Nasr et al.~\cite{nasr2021adversary} admit, this adversary is often unrealistically strong, and perhaps assessing the performance of MIAs that are given all training set samples but one is not very helpful towards understanding MIA in practical cases.

Another problem with this experiment is the \emph{relation} between the membership bit $b$, the observed samples $z$ and $z'$, and the trained model $a$.
In $\ExpSA$, the bit $b$ determines the training set of $a$ and thus they are causally \emph{related} ($\Pr(a|b)\neq\Pr(a)$).
We argue that there is no such dependence in an actual MIA scenario.
In practice, the data owner trains their model once, with their (one and only) training set $S$.
Then, the adversary receives the model and tries to determine whether a sample $z$ is a member of its training set.
Here, the bit $b$ is \emph{a property of $z$} representing its membership status, and not a property of the model $a$, i.e., $\Pr(a|b)=\Pr(a)$. 
Yeom et al.'s experiment~\cite{yeom2017privacy}, described next, captures this relation.

\subsection{IID Membership Experiment}

Yeom et al.~\cite{yeom2017privacy} propose the membership experiment that we show in Algorithm~\ref{alg:yeom}
This experiment does not receive the training set $S$ as an input, but a distribution $\dist$ from which the samples of $S$ are independently sampled (IID).
The data owner trains $a$ using $S$ and then chooses a bit $b\in\{0,1\}$ uniformly at random.
This bit determines whether the adversary receives a member ($z$ chosen randomly from $S$) or a non-member (a fresh sample $z\sim\dist$).
The adversary receives the element $z$ and the trained model $a$, and knows the training set size $n$, the training algorithm $A$, and the element distribution $\dist$.
With this information, the adversary carries out an attack $\Att(z, a, n, A, \dist)$ that outputs a bit, indicating the predicted membership of $z$.
The adversary succeeds (denoted $\ExpIID=1$) if the attack correctly infers the membership status of $z$.

\begin{algorithm}
	\begin{algorithmic}[1]
	\Procedure{$\ExpIID$}{$\Att, A, n, \dist$}
	\State Sample $S\sim \dist^n$; \label{line:sampleyeom}
	\State Train $a=A(S)$;
	\State Choose $b\sim\{0,1\}$ uniformly at random;
	\If{$b=0$}
		\State Draw $z\sim S$;
	\Else
		\State Draw $z\sim \dist$;
	\EndIf	
	\State Return 1 if $\Att(z, a, n, A, \dist)=b$; else 0.
	\EndProcedure
	\end{algorithmic}
	\caption{IID Membership Experiment~\cite{yeom2017privacy}}
	\label{alg:yeom}
\end{algorithm}

Yeom et al.~\cite{yeom2017privacy} prove the following upper bound on the membership advantage in $\ExpIID$:

\begin{theorem}[Yeom et al.'s bound~\cite{yeom2017privacy}] 
\label{theo:yeom_bound}
	Let $A$ be an ($\epsilon,0$)-DP learning algorithm. Then, for all attacks $\Att$, training set sizes $n$, and data point distributions $\dist$, the membership advantage in $\ExpIID$ satisfies
	\begin{equation} \label{eq:yeombound}
	\Adv(\Att,A,n,\dist)\leq e^\epsilon - 1\,.
	\end{equation}
\end{theorem}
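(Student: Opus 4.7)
The plan is to exploit the IID structure of $\ExpIID$ to reduce the membership advantage to a comparison of two distributions that differ by a single training-set entry, at which point we can invoke $(\epsilon,0)$-DP directly. Concretely, I would write out the TPR and FPR as expectations over a single joint sample of $n+1$ IID points and then set up a coupling that changes only one coordinate of the training set.

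First, I would sample $z_0, z_1, \dots, z_n \sim \dist$ IID and observe that, by exchangeability of IID samples, the member case ($b=0$) has the same distribution as taking $z = z_1$ and $a = A(z_1, z_2, \dots, z_n)$; similarly the non-member case ($b=1$) has the same distribution as taking $z = z_0$ and $a = A(z_1, z_2, \dots, z_n)$, with $z_0$ independent of everything else. Writing $f(z,a) \doteq \Pr(\Att(z,a,n,A,\dist) = 0)$ (averaged over the internal coins of $\Att$), I get
\begin{align*}
 \Pr(\Att{=}0 \mid b{=}0) &= \mean{}[f(z_1, A(z_1, z_2, \dots, z_n))],\\
 \Pr(\Att{=}0 \mid b{=}1) &= \mean{}[f(z_0, A(z_1, z_2, \dots, z_n))].
\end{align*}

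Next, since $z_0$ and $z_1$ are IID and the non-member expectation does not involve $z_1$ in the first slot, I would rename variables (swap the roles of $z_0$ and $z_1$) to rewrite the non-member probability as $\mean{}[f(z_1, A(z_0, z_2, \dots, z_n))]$. Now the two expectations differ only in which of $z_0, z_1$ sits in the first coordinate of the training set fed to $A$, so the training sets are neighbouring datasets in the sense of Definition~\ref{defn:dp}. Applying $(\epsilon,0)$-DP pointwise (for any fixed values of $z_0, z_1, \dots, z_n$, and recalling that $f$ takes values in $[0,1]$, so $\mean{}_A[f(z_1, A(\cdot))] \leq e^\epsilon \mean{}_A[f(z_1, A(\cdot'))]$ for neighbouring inputs) and then integrating over the IID samples gives
\begin{equation*}
 \Pr(\Att{=}0 \mid b{=}0) \leq e^\epsilon \cdot \Pr(\Att{=}0 \mid b{=}1).
\end{equation*}

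Finally, I would subtract $\Pr(\Att{=}0 \mid b{=}1)$ from both sides, yielding $\Adv \leq (e^\epsilon - 1)\,\Pr(\Att{=}0 \mid b{=}1) \leq e^\epsilon - 1$ since probabilities are bounded by $1$. The main subtlety is the renaming step: one has to argue cleanly that exchangeability lets us align the two cases so that DP can be applied with the \emph{same} ``fixed'' sample $z_1$ appearing in the first argument of $f$ in both expectations; once that coupling is in place, the rest is a one-line application of DP and the definition of $\Adv$.
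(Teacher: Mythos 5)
Your proof is correct. Note that the paper does not actually prove Theorem~\ref{theo:yeom_bound} itself --- it imports the statement from Yeom et al.~\cite{yeom2017privacy} --- and your argument is essentially the original one: couple the member and non-member cases so that they differ only in one coordinate of the training set, apply the $(\epsilon,0)$-DP inequality (extended from events to expectations of $[0,1]$-valued functions via the layer-cake formula, which is exactly where your remark that $f\in[0,1]$ is needed), and conclude $\TPR\leq e^\epsilon\cdot\FPR$. The exchangeability step is sound; if one worries about $A$ being order-sensitive, condition on the index of the drawn member and run the same neighbouring-dataset argument at that position. It is worth contrasting your route with how the paper proves its own, tighter bound for the same experiment (Theorem~\ref{theo:newIID}, Appendix~\ref{sec:app}): there $\ExpIID$ is reduced to the strong-adversary experiment $\ExpSA$ by handing the attacker $\tilde{S}$ and both candidates $z,z'$, and the hypothesis-testing characterization of DP due to Kairouz et al.~\cite{kairouz2017composition} supplies \emph{two} inequalities, $\FPR+e^\epsilon(1-\TPR)\geq 1-\delta$ and $(1-\TPR)+e^\epsilon\FPR\geq 1-\delta$, whose sum gives $(e^\epsilon-1+2\delta)/(e^\epsilon+1)$. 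Your coupling produces only the single inequality $\TPR\leq e^\epsilon\FPR$, and discarding $\FPR\leq 1$ at the end is what makes the resulting bound loose (vacuous for $\epsilon>\ln 2$, as the paper notes). A small bonus: if instead of bounding $\FPR\leq 1$ you combine your $\Adv\leq(e^\epsilon-1)\FPR$ with the trivial $\Adv=\TPR-\FPR\leq 1-\FPR$ and optimize over $\FPR$, your own argument already yields $\Adv\leq 1-e^{-\epsilon}$, i.e., Erlingsson et al.'s bound \eqref{eq:googlebound} with $\delta=0$, at no extra cost.
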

Note that the advantage is also upper bounded by $1$, so this bound is loose for $\epsilon>\ln 2$.
More recently, Erlingsson et al.~\cite{erlingsson_that_2020}
provide a tighter bound using results from Hall et al.~\cite{hall2013differential} for the more generic setting of ($\epsilon,\delta$)-DP:
\begin{theorem}[Erlingsson et al.'s bound~\cite{erlingsson_that_2020}] 
\label{theo:google_bound}
	Let $A$ be an ($\epsilon$,$\delta$)-DP learning algorithm. Then, for all attacks $\Att$, training set sizes $n$, and data point distributions $\dist$, the membership advantage in $\ExpIID$ satisfies
	\begin{equation} \label{eq:googlebound}
	\Adv(\Att,A,n,\dist)\leq 1 - e^{-\epsilon} ( 1 - \delta)\,.
	\end{equation}
\end{theorem}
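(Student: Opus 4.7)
The plan is to exploit the IID structure of $\ExpIID$ so that the $b=0$ and $b=1$ branches differ by only one training-set element, and then invoke $(\epsilon,\delta)$-DP pointwise. The crucial structural observation is that when $S\sim\dist^n$ and one element $z$ is drawn uniformly from $S$ (the $b=0$ branch), $z$ itself is marginally distributed as $\dist$ and the remaining $n-1$ elements are IID from $\dist$; equivalently, we may re-sample by first drawing $z\sim\dist$ and then $S'\sim\dist^{n-1}$, and setting $S=\{z\}\cup S'$. For the $b=1$ branch, $z\sim\dist$ is independent of $S$, and $S$ itself decomposes as $\{z^*\}\cup S'$ with $z^*\sim\dist$ and $S'\sim\dist^{n-1}$, all independent of $z$. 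Hence, for every fixed $z$ and every measurable $R\subseteq\mathcal{A}$,
\begin{equation*}
\Pr(a\in R\mid b=0,z)=\mean{S'}\bigl[\Pr(A(\{z\}\cup S')\in R)\bigr],
\end{equation*}
whereas $\Pr(a\in R\mid b=1,z)=\mean{z^*,S'}\bigl[\Pr(A(\{z^*\}\cup S')\in R)\bigr]$ is independent of $z$.

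With this decomposition the sets $\{z\}\cup S'$ and $\{z^*\}\cup S'$ are neighbouring (they differ in a single entry), so Definition~\ref{defn:dp} gives $\Pr(A(\{z\}\cup S')\in R)\le e^\epsilon\Pr(A(\{z^*\}\cup S')\in R)+\delta$ for every $z,z^*,S'$. Taking expectations over $z^*\sim\dist$ and $S'\sim\dist^{n-1}$ yields the pointwise inequality $\Pr(a\in R\mid b=0,z)\le e^\epsilon\,\Pr(a\in R\mid b=1)+\delta$ for every $z$. Now let $R(z)=\{a:\Att(z,a,n,A,\dist)=0\}$ denote the attack's ``guess member'' region. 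Since the marginal of $z$ equals $\dist$ in both branches, $\TPR=\mean{z\sim\dist}\bigl[\Pr(a\in R(z)\mid b=0,z)\bigr]$ and $\FPR=\mean{z\sim\dist}\bigl[\Pr(a\in R(z)\mid b=1)\bigr]$, so integrating the pointwise bound over $z\sim\dist$ produces $\TPR\le e^\epsilon\FPR+\delta$.

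The last step is purely algebraic: rearranging gives $\FPR\ge e^{-\epsilon}(\TPR-\delta)$, so
\begin{equation*}
\Adv=\TPR-\FPR\le \TPR\bigl(1-e^{-\epsilon}\bigr)+e^{-\epsilon}\delta\le 1-e^{-\epsilon}(1-\delta),
\end{equation*}
where the final inequality uses $\TPR\le 1$; this is exactly the claimed bound. The main obstacle I anticipate is the careful bookkeeping in the first step: in the member branch $z$ and $S$ are statistically coupled (since $z\in S$), so one has to justify the re-sampling argument by appealing to the exchangeability of the IID draw in Line~\ref{line:sampleyeom} together with the uniform selection of $z$ from $S$. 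Once that distributional equivalence is in hand, the remainder is a single DP invocation averaged over a product measure plus elementary rearrangement.
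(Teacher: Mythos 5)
Your proof is correct, but note that the paper does not actually prove this statement: Theorem~\ref{theo:google_bound} is imported verbatim from Erlingsson et al.\ (who in turn rely on Hall et al.), so there is no in-paper proof to match against. Your derivation is a clean, self-contained argument: the exchangeability step (re-describing the $b=0$ branch as ``draw $z\sim\dist$, then $S'\sim\dist^{n-1}$, train on $\{z\}\cup S'$'') is valid and is indeed the crux, since it reduces the two branches to neighbouring datasets differing in one entry, after which a single pointwise application of Definition~\ref{defn:dp} gives $\TPR\leq e^\epsilon\FPR+\delta$ and the algebra follows. It is worth comparing your route to what the paper does for the analogous Theorem~\ref{theo:newIID} in Appendix~\ref{sec:app}: there the authors reduce $\ExpIID$ to the strong-adversary game $\ExpSA$ by handing a hypothetical attacker $\tilde S=S\setminus\{z\}$ and both candidates $z,z'$, and then invoke the \emph{two} Kairouz-style inequalities $\FPR+e^\epsilon(1-\TPR)\geq 1-\delta$ and $(1-\TPR)+e^\epsilon\FPR\geq 1-\delta$. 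Your argument uses only the one-sided inequality $\TPR\leq e^\epsilon\FPR+\delta$, which is exactly why you land on the weaker bound $1-e^{-\epsilon}(1-\delta)$ of Theorem~\ref{theo:google_bound} rather than the tighter $(e^\epsilon-1+2\delta)/(e^\epsilon+1)$ of Theorems~\ref{theo:new} and~\ref{theo:newIID}; if you additionally applied DP in the reverse direction (swapping the roles of the two neighbouring datasets) you would recover the paper's improved bound with essentially no extra work. One small bookkeeping point: since $A$ is formally a map on $\mathcal{Z}^n$ (ordered tuples), the exchangeability argument should condition on the uniformly chosen index $i$ of the selected member and apply the neighbouring-dataset inequality at that position; this changes nothing substantive.
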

We can see that this bound is less than or equal to 1.

In Appendix~\ref{sec:app}, we prove that the advantage bound we derive in Theorem~\ref{theo:new} holds in the case where members and non-members are \emph{statistically exchangeable}, i.e., when the joint distribution of a sequence of $n$ members and one non-member does not depend on the order they appear in the sequence.
We also explain the IID experiment is a particular case of statistical exchangeability, and thus our bound holds in this case:
\begin{restatable}{theorem}{newIID}[New Membership Advantage Bound]
\label{theo:newIID}
Let $A$ be an ($\epsilon$,$\delta$)-DP learning algorithm. 
Then, for all attacks $\Att$, training set sizes $n$, and data point distributions $\dist$, the membership advantage in $\ExpIID$ satisfies
\begin{equation} \label{eq:newbound}
 \Adv(\Att,A,n,\dist)\leq (e^\epsilon-1+2\delta)/(e^\epsilon+1)\,.
\end{equation}
\end{restatable}
This bound improves over previous results~\cite{yeom2017privacy,erlingsson_that_2020}; i.e., it is tighter than \eqref{eq:yeombound} and \eqref{eq:googlebound} for all $\epsilon\geq 0$ and $0\leq\delta\leq 1$.
We prove this in Appendix~\ref{app:tightest}.
Figure~\ref{fig:bounds} shows a comparison between the existing membership advantage upper bounds (Yeom et al.~\eqref{eq:yeombound} and Erlingsson et al.~\eqref{eq:googlebound}) and our bound \eqref{eq:newbound}.
We used a value of $\delta=10^{-5}$ for \eqref{eq:googlebound} and \eqref{eq:newbound}, since this is the value of $\delta$ we use in our experiments (described in Section~\ref{sec:setup}) to ensure that $\delta$ is less than the inverse of the training set size.
Our bound improves Erlingsson et al.'s by almost $0.2$ in advantage for relevant privacy values $0.1\leq\epsilon\leq 2$.
These bounds suggest that high privacy settings $0.01\leq \epsilon\leq 0.1$ completely thwart membership inference attacks (in the IID setting),
values of $\epsilon\approx 1$ achieve intermediate privacy levels, and large values of $\epsilon\geq 10$ provide no worst-case privacy guarantee on the membership advantage.
\begin{figure}
	\centering
		\includegraphics[width=0.9\linewidth]{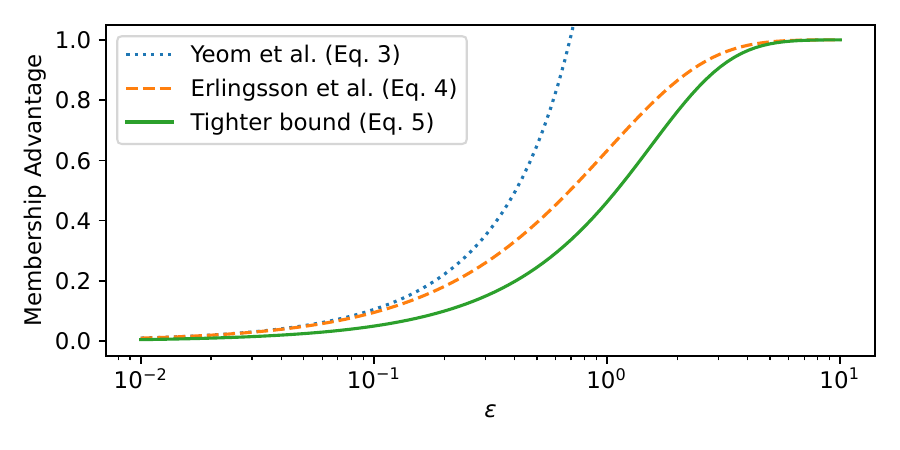}
	\caption{Upper bounds on the membership advantage in the IID Membership Experiment.}\label{fig:bounds}
\end{figure}

We note that we have considered the original experiment by Yeom et al.~\cite{yeom2017privacy}, where members and non-members are sampled independently from $\dist$.
In practice, members and non-members are typically different samples, so \emph{sampling without replacement} is a more realistic setting.
Sampling without replacement introduces dependencies among members and non-members, but the exchangeability property still holds in that case, and thus our bound still applies.
For simplicity, in the remainder of the paper we use the term \emph{IID sampling} to denote either \emph{sampling with or without replacement}, since our findings apply to both settings.
The IID membership experiment captures the evaluation approach of most related works on membership inference~\cite{shokri2016membership,jayaraman_evaluating_2019, jayaraman2020revisiting, yeom2017privacy,jagielski2020auditing, nasr2021adversary}.
These works build the member and non-member sets by randomly splitting a dataset, which simulates IID sampling.
However, this procedure excludes the case where there are dependencies between training set samples (e.g., a training set bias).
There is a large body of literature in ML that studies this scenario with the goal of improving the accuracy of models on out-of-distribution samples~\cite{Mohammad2020, arjovsky2021distribution, pmlr-v97-recht19a}.
For example, in a health care setting such as MRI scanning, the images seen in the wild are often from a different distribution to the training data of the model~\cite{Martensson2020}.
Spurious correlations in ML datasets have also received a lot of attention in the ML literature as they can have a detrimental effect on model accuracy~\cite{nagarajan2021understanding, pmlr-v119-sagawa20a, sagawa2020distributionally}.
Despite the problems of non-IID data being known to the ML community, this case has been completely neglected in the MIA literature.

\subsection{A Membership Experiment with Data Dependencies}
\label{sec:gen}

We have argued that the strong-adversary membership experiment considers an unrealistically strong adversary and does not capture the idea that membership should be a property of a given data point, and not the model itself.
Also, the IID experiment is limited to a scenario that, as suggested by related work in ML, does not hold in many use cases.
Motivated by this, we propose a new membership experiment that extends $\ExpIID$ to cases where the training set samples may have dependencies. 
We consider the particular case where the joint dataset distribution, denoted $\distn$, is a \emph{multivariate mixture model}.
A mixture model is a probabilistic model that characterizes the distribution of an overall population by taking into account the presence of subpopulations within the overall population.
Each of the subpopulations is characterized by a probability distribution called a \emph{mixture component}.
To draw one sample from a mixture model, one first selects one of the mixture components (subpopulations) at random, and then draws a sample from the mixture component.
Note that the mixture components may have overlapping support, which implies that, given a point sampled from the mixture model (i.e., the overall population), it might not be possible to identify from which of the mixture components (i.e., subpopulations) it was sampled from.
We use \emph{multivariate} mixture components, i.e., each component models the distribution of all attributes of a data sample within a subpopulation.
This particular case of $\distn$ is amenable to theoretical analyses (e.g., Sect.~\ref{sec:boundsdonothold}) and is general enough to accommodate many real-life examples, as we show in Section~\ref{sec:dataset}.
We call this the Mixture Model (MM) membership experiment, and denote it by $\ExpNoIID$.

\begin{algorithm}
	\begin{algorithmic}[1]
	\Procedure{$\ExpNoIID$}{$\Att, A, n, \distn$}
	\Comment{$\distn=\{\dist_1, \dots, \dist_K\}$}
	\State Choose $k\sim[K]$, sample $S\sim\dist^n_k$; \label{line:samp1}
	\State Train $a=A(S)$;
	\State Choose $b\sim\{0,1\}$ uniformly at random;
	\If{$b=0$}
		\State Draw $z\sim S$;
	\Else
		\State Choose $k'\sim[K]\setminus k$, draw $z\sim\dist_{k'}$; \label{line:samp2}
	\EndIf
	\State Return 1 if $\Att(z, a, n, A, \distn)=b$; else 0.
	\EndProcedure
	\end{algorithmic}
	\caption{Mixture Model (MM) Membership Experiment}
	\label{alg:ExpJoint}
\end{algorithm}

Algorithm~\ref{alg:ExpJoint} shows the MM membership experiment.
The joint distribution $\distn$ is a multivariate mixture model with $K$ mixture components.
Each mixture component $k\in[K]$ models the training set samples independently following a distribution $\dist_k$.
We overload the notation of $\distn$ in this case for notational simplicity, and use it to denote the set of all mixture distributions $\distn=\{\dist_1,\dots,\dist_K\}$.
In short, to sample $S\sim\distn$, we choose a mixture component uniformly at random ($k\sim[K]$) and then draw $n$ samples \emph{independently} from $\dist_k$ (line~\ref{line:samp1}).
Note that, even though the samples are independently drawn from a mixture component $\dist_k$, they are statistically dependent since they are drawn from the \emph{same} (unknown) mixture component.
When $b=1$, the non-member $z$ is sampled from a mixture component that is \emph{different} from the training set's component; i.e., $z\sim\dist_{k'}$, for $k'\neq k$ (line~\ref{line:samp2}). 
This experiment can also encompass the IID case in $\ExpIID$ when all $\dist_k$ (for $k\in[K]$) are identical.

The \genExp~accommodates many realistic settings.
As an example, consider a case where there are $K$ hospitals in a city.
Let $\dist_k$ be the distribution that models the samples from the $k$th hospital, for all $k\in[K]$.
In this example, $\ExpNoIID$ represents the case where the target model $a$ is trained with samples belonging to a particular hospital $k\in[K]$, and the samples from all the other hospitals $k'\in[K]\setminus k$ are non-members.
If the distributions of samples in each hospital are slightly different, there will be certain dependencies among members and non-members.
Previous theoretical experiments ($\ExpSA$ and $\ExpIID$) cannot account for these dependencies.
In Section~\ref{sec:eval} we study this and other realistic examples that follow $\ExpNoIID$.

We note that, following $\ExpNoIID$, the adversary has statistical knowledge about the possible dependencies between training set samples ($\distn$).
However, since the choice of $k$ is not revealed, the adversary cannot tell which mixture component distribution the members follow.
Even though this statistical knowledge is powerful, it does not provide the adversary any a-priori membership advantage.
Indeed, note that, when $\epsilon=0$ (or, alternatively, if the adversary did not receive the target model $a$), the adversary cannot do better than randomly guessing the membership of $z$ ($\Adv=0$).
This is because, without the model $a$, the distribution of $z$ is the same regardless of the membership bit $b$.
However, when $\epsilon>0$, the adversary can exploit the statistical dependencies among samples to improve the membership inference.
We also note that, in our evaluation in Section~\ref{sec:eval}, we evaluate off-the-shelf MIAs that \emph{do not use the statistical dependencies} captured by $\distn$.
However, we will see that these dependencies still cause these existing attacks to perform better than in the IID scenario.

The \genExp~captures the relation between $b$, $z$, and $a$, considers a realistic adversary that only has statistical knowledge about the dataset, and accommodates both IID training set distributions ($\dist_k\equiv\dist_{k'}$, $\forall k, k'\in[K]$) as well as those that incorporate dependencies among samples or bias ($\dist_k\not\equiv\dist_{k'}$).
Therefore, we believe that this experiment provides a better template to define membership inference, and that evaluating MIAs under this experiment will provide results that are more representative than previous experiments of what one could expect in practice.

\subsubsection{DP-based bounds on the generalized membership experiment}
\label{sec:boundsdonothold}
We note that existing DP-based bounds~\cite{yeom2017privacy,erlingsson_that_2020} on the membership advantage $\Adv$, as well as our new bound in Eq.~\eqref{eq:newbound}, do not apply in the \genExp.
We note that this is not a flaw of DP: the training algorithms we evaluated provide differential privacy, but this is not enough to bound the leakage when defining membership inference as in $\ExpNoIID$. 
The reason for this is that DP mechanisms limit the effect that a \emph{single} sample has on the output (see Def.~\ref{defn:dp}).
However, when training samples have dependencies, limiting the effect of a single input is less effective since the remaining data points can also leak information about this input due to their co-dependencies~\cite{liu2016dependence}.

We provide more intuition as to why the previous bounds do not hold with an example.
Without loss of generality and for simplicity, in this analysis we assume that the space of samples $\mathcal{Z}$ is discrete.
In both $\ExpIID$ and $\ExpNoIID$, the adversary observes the released model $a$ and a sample $z$ (along with general parameters $n$, $A$, $\dist$, that we omit from the probability expressions here for notational simplicity).
The joint distribution of $a$ and $z$ depends on whether $z$ was a member of $a$'s training set ($b=0$), or a non-member ($b=1$).
Next, we write general expressions for the likelihood of $a$ and $z$ given the value of $b$.
We use these expressions to compute the maximum likelihood ratio, which bounds the adversary advantage, and study how an $(\epsilon,0)$-DP mechanism bounds this ratio in the IID and non-IID cases.

First, when $z$ is a member ($b=0$), and using $\tilde{S}$ to denote all other $n-1$ members, we can write
\begin{equation*}
	\Pr(a,z|b=0)=\Pr(z)\cdot\sum_{\tilde{S}} \Pr(\tilde{S}|z)\Pr(A(\tilde{S}\cup \{z\})=a)\,.
\end{equation*}
Here, $\Pr(\tilde{S}|z)$ is the probability of $n-1$ training set samples (all but $z$) conditioned on the fact that $z$ is also a training set sample.

When $z$ is a non-member (denoted $b=1$), $\tilde{S}$ still denotes $n-1$ members, and we use $z'$ to define the $n$th member.
Then, using the law of total probability we can write the likelihood as
\begin{equation*}
    \resizebox{\hsize}{!}{
	$\displaystyle\Pr(a,z|b=1)=\Pr(z)\sum_{z'} \Pr(z')\sum_{\tilde{S}} \Pr(\tilde{S}|z')\Pr(A(\tilde{S}\cup \{z'\})=a).$}
\end{equation*}
(As above, $\Pr(\tilde{S}|z')$ is the probability of the $n-1$ training set samples $\tilde{S}$ given the remaining training sample $z'$.)

The maximum likelihood ratio $L_{max}=\max_{a,z,\mathfrak{b}}\frac{\Pr(a,z|b=\mathfrak{b})}{\Pr(a,z|b=1-\mathfrak{b})}$ can be used to bound the adversary advantage.\footnote{We omit the derivations for space issues; in short, $L_{max}$ lower-bounds the adversary's probability of error by $1/(1+L_{max})$, which in turn upper-bounds the advantage by $\Adv\leq(L_{max}-1)/(L_{max}+1)$.
Note this matches \eqref{eq:newbound} for $\delta=0$.}
We show why using an $(\epsilon,0)$-DP training algorithm provides a strong bound on this ratio in the IID case ($\ExpIID$), but there are non-IID cases ($\ExpNoIID$) where the bound is not meaningful.
First, recall that an $(\epsilon,0)$-DP training algorithm ensures that, for two training sets $S$ and $S'$ that differ in \emph{one} sample, and for all models $a\in\mathcal{A}$, $\Pr(A(S)=a)\leq e^\epsilon \cdot \Pr(A(S')=a)$.
In the IID scenario ($\ExpIID$), samples are independent and therefore $\Pr(\tilde{S}|z)=\Pr(\tilde{S})$.
The likelihood ratio between hypotheses $b=1$ and $b=0$ is bounded by:
\begin{align*}
  \frac{\Pr(a,z|b=1)}{\Pr(a,z|b=0)}&=\frac{\sum\limits_{z'} \Pr(z')\sum\limits_{\tilde{S}} \Pr(\tilde{S})\Pr(A(\tilde{S}\cup \{z'\})=a)}{\sum\limits_{\tilde{S}} \Pr(\tilde{S})\Pr(A(\tilde{S}\cup \{z\})=a)}\\
		&\leq\frac{\sum\limits_{z'} \Pr(z')\sum\limits_{\tilde{S}} \Pr(\tilde{S})\Pr(A(\tilde{S}\cup \{z\})=a) e^\epsilon}{\sum\limits_{\tilde{S}} \Pr(\tilde{S})\Pr(A(\tilde{S}\cup \{z\})=a)}\\
		&=\sum\limits_{z'}\Pr(z')\cdot e^\epsilon = e^\epsilon\,.
\end{align*}
The same applies to the reciprocal $\Pr(a,z|b=0)/\Pr(a,z|b=1)$. 
In the \genExp, if we are explicitly in a non-IID case, then $\Pr(\tilde{S}|z)\neq\Pr(\tilde{S})$ and the derivations above cannot be carried out.
As an example, consider the pathological case where the mixture components in $\distn$ are different and deterministic, i.e., all the training set samples are identical.
In that case, $\Pr(\tilde{S}|z)$ is only non-zero when $\tilde{S}=\{z\}^{n-1}$ (in that case, $\Pr(\{z\}^{n-1}|z)=1$).
Then,
\begin{align*}
  \frac{\Pr(a,z|b=1)}{\Pr(a,z|b=0)}&=\frac{\sum\limits_{z'} \Pr(z')\sum\limits_{\tilde{S}} \Pr(\tilde{S}|z')\Pr(A(\tilde{S}\cup \{z'\})=a)}{\sum\limits_{\tilde{S}} \Pr(\tilde{S}|z)\Pr(A(\tilde{S}\cup \{z\})=a)}\\
		&=\sum\limits_{z'} \Pr(z')\frac{\Pr(A(\{z'\}^{n})=a)}{\Pr(A(\{z\}^{n})=a)}\,.
\end{align*}
We can apply the DP definition $n$ times and bound the expression above by $e^{n\cdot\epsilon}$, but this is certainly not useful since $n$ is typically very large.
More generally, bounding the expression above requires using DP to provide \emph{group privacy}, which does not yield useful bounds for large $n$~\cite{dwork2014algorithmic}. 
Even though this is an extreme example, it illustrates why the previous DP-based bounds (that are independent of dataset size) on the membership advantage do not apply to $\ExpNoIID$.

The pathological example above shows that an upper bound on $\Adv$ can be very large in non-IID scenarios.
However, there might be particular distributions $\distn$ for which one could prove an upper bound that lies closer to the IID bound~\eqref{eq:newbound}.
Also, empirical MIAs have been shown to achieve advantage levels $\Adv$ far below DP-based upper bounds~\cite{jayaraman_evaluating_2019} in the IID scenario.
Thus, it is important to empirically measure the performance of MIAs in the non-IID case, as performance could be similar to the IID case.
We study this in the next section.

\section{Empirical Evaluation of Membership Inference with Data Dependencies}
\label{sec:eval}

In this section, we empirically investigate the effects of non-IID training sets on the performance of off-the-shelf MIAs, following our \genExp~$\ExpNoIID$.

\begin{table*}[t]
	\centering
	\caption{Evaluation Setup Summary}\label{tab:setup}
	\begin{tabular}{c | c |c c | c c | c }
		Evaluation & \multirow{2}{*}{Dataset and experiment} & Number of Features 	&  \multirow{2}{*}{Number of classes} & \multirow{2}{*}{Members} 	& \multirow{2}{*}{Non-members}  	&
		\multirow{2}{*}{Shadow model data} \\  
		(Section)  & 																		& (after encoding)		& 																		&														& 								 								&
		 \\ \hline
\multirow{2}{*}{\ref{sec:cluster}} 	& \adult~cluster       	& $104$ & $2$   & $10\,000$ & $10\,000$ & $15\,132$ \\
																		& \compas~cluster      	& $15$  & $2$   & $2\,000$  & $2\,000$  & -         \\ \hline
\multirow{2}{*}{\ref{sec:att}} 			& \adult~education     	& $87$  & $2$   & $10\,000$ & $10\,000$ & $10\,772$ \\
																		& \adult~gender        	& $103$ & $2$   & $10\,000$ & $10\,000$ & $11\,192$ \\ \hline
\multirow{7}{*}{\ref{sec:dataset}}	& \heart,~Cleveland		 	& $22$  & $2$   & $303$     & $617$     & -         \\
																		& \heart,~Hungary		 		& $22$  & $2$   & $294$     & $626$     & -         \\
																		& \heart,~Switzerland 	& $22$  & $2$   & $123$     & $797$     & -         \\
																		& \heart,~VA Long Beach	& $22$  & $2$   & $200$     & $720$     & -         \\
																		& \students,~Gabriel Pereira& $42$  & $2$   & $423$     & $226$     & -         \\
																		&\texas, region $\#3$ & $269$ & $100$ & $10\,000$ & $10\,000$ & $20\,000$ \\
																		& \census~and \adult~   & $100$ & $2$   & $10\,000$ & $10\,000$ & $20\,000$ \\
	\end{tabular}
\end{table*}

\subsection{Evaluation Setup}\label{sec:setup}
We explain how we instantiate $\ExpNoIID$ with real-world datasets.
In each of our experiments, we initialize $K$ disjoint sets of samples $\{D_k\}_{k\in[K]}$ with real data (in most of our experiments, we use $K=2$).
Each of these sets characterizes one of the mixture components: drawing a sample from $\dist_k$ is simply drawing a sample from the set $D_k$.
We use sampling \emph{without replacement} to ensure that every member and non-member is a distinct data sample. 
Even though this is an obvious assumption in practice (duplicate samples do not make sense in most problems), we note that all the theoretical experiments in Section~\ref{sec:theo} allow for identical members and non-members.

In each of the sections below, we follow a different approach to build the sets of samples $\{D_k\}_{k\in[K]}$, to study different types of data dependencies.
In each experiment, given a training set size $n$, we take $n$ samples from a mixture component $D_k$ as the member set, and draw $n$ samples at random from all other mixture components as non-members.
We train the model on the members, and evaluate each of the attacks we consider on all members and non-members.
We compute an attack's $\TPR$ by looking at the proportion of members correctly identified as such, and the $\FPR$ by computing the proportion of non-members wrongly classified as members by the attack.
Then, the attack's membership advantage is simply $\Adv=\TPR-\FPR$.
We label this advantage as ``non-IID'' in our plots.
We also measure the classification accuracy of the model over members and non-members in this non-IID case, to give insight into the attacks' performance.

In order to study the effect of data dependencies vs.~independent data sampling, we also measure the membership advantage if members and non-members were IID samples.
To do this, for each non-IID experiment, we merge the member and non-member sets, and generate a new member and non-member set by resampling from this merged set (we keep the original member and non-member set sizes).
The advantage in this case represents the attack's performance in the IID experiment, and we label it ``IID'' in our plots.

\paragraph{Datasets}

We use three publicly available datasets considered in prior work~\cite{kulynych2022disparate, shokri2016membership} and three additional datasets from the UCI machine learning repository~\cite{UCI_Repo}:
\begin{enumerate}
    \item The \adult~dataset~\cite{adult_dataset}, which contains $48\,842$ data samples from the 1994 US census, each with 14 attributes such as age, gender, and education. The binary classification task is deciding whether the individual earns more than \$50k a year.
    \item The \compas~dataset~\cite{compas_dataset}, extracted from ProPublica's investigation into racial bias in ML, which contains $6\,172$ samples with 15 attributes such as age, sex, and number of prior offenses.
		The classification task is predicting whether or not an individual re-offended within 2 years.
    \item The \texas~dataset~\cite{texas_dataset} contains a series of records of inpatient stays in various hospitals published by the Texas Department of State Health Services. The classification task is predicting the procedure. 
		We follow the approach by Shokri et al.~\cite{shokri2016membership} and compute the top-$100$ most popular procedures as the classification label (we discard any rows not in the top-$100$). 
		The final dataset contains $350\,280$ samples each with $66$ attributes such as length of stay, age, and total charges.
    \item The \heart~datasets~\cite{heart_dataset} are four datasets collected from hospitals in Cleveland, Hungary, Switzerland, and the VA Long Beach. 
		The datasets contain $303$, $294$, $123$, and $200$ data samples, respectively. 
		We use the $14$ attributes common to all datasets such as age, cholesterol, and fasting blood sugar with the classification task of predicting if the patient has a heart condition.
    \item The \students~dataset~\cite{student_dataset} contains data samples from a study on student achievement in Portuguese schools. 
		We use the Portuguese language dataset, which has $649$ samples from two schools ($423$ and $226$ samples each) containing $30$ attributes such as age, study time, and number of absences.
		The classification task is predicting if the student passed or failed the course, based on the final grade.
		We remove the two intermediate grades from the attribute list, as they are highly correlated with the final grade.
    \item The \census~dataset~\cite{census_dataset} contains data extracted from the 1994 and 1995 US Census. 
		We take all $99\,827$ data samples from the 1994 dataset and extract the $10$ attributes in common with the \adult~dataset. 
		The classification task is to predict whether an individual earns more than \$50k a year.
\end{enumerate}
We replace missing values with the mean or mode value for that attribute and use a one-hot encoding for all categorical attributes.
For example, if the categorical attribute has values red, green, and blue we add three new binary attributes, one for each colour.
Since we only used a subset of the attributes for certain datasets, the rows were not always unique. 
To address this, we remove duplicate rows, keeping a single copy of each duplicate.
Table~\ref{tab:setup} summarizes how we use these datasets in our experiments.
In most experiments, we set a value for $n$ ($n=10\,000$ in most cases, and $n=2\,000$ in \compas), and sample $n$ members and $n$ non-members using the procedure described above.
When the remaining data is large enough, we use it to evaluate the \shokri~(when we do not have enough data, denoted ``--'' in Table~\ref{tab:setup}, we skip this attack).
In experiments where the datasets are small (\heart~and \students), we use an uneven number of members and non-members (we explain this in Section~\ref{sec:dataset}).

\paragraph{Model Architecture}
Following the work of Jayaraman et al.~\cite{jayaraman_evaluating_2019}, and others~\cite{shokri2016membership, Abadi_2016}, we use a ReLU network with 2 hidden layers, each with 256 neurons trained for 100 epochs as the target model in all evaluations.
The models use the DP ADAM Gaussian Optimizer and RDP accountant from TensorFlow Privacy~\cite{tf_privacy} with $\ell_2$ regularization to avoid overfitting.
The default hyper-parameters are $10^{-5}$ as the  $\ell_2$ regularization coefficient, $10^{-2}$ as the learning rate, and $200$ as the batch size~\cite{jayaraman_evaluating_2019}.
We vary $\epsilon$ in our evaluations and fix $\delta$ to equal $10^{-5}$. 
This follows the general recommendation that $\delta$ should be less than the inverse of the training set size~\cite{dwork2014algorithmic}.
We note that $\epsilon=0$ still leaks information, since $\delta>0$.

\paragraph{Membership Inference Attacks}
We consider two membership inference attacks evaluated in prior work~\cite{jayaraman_evaluating_2019} that we summarized in Section~\ref{sec:mia}.
The \emph{\shokri}~\cite{shokri2016membership} requires a large pool of publicly available data in order to train shadow models. 
Therefore we only run this attack on datasets that have leftover samples after building our member and non-member sets (see Table~\ref{tab:setup}).
Following prior work~\cite{jayaraman_evaluating_2019, shokri2016membership}, we train five shadow models, each with the same architecture as the target model.
The attack model uses the same architecture as the target model but with $64$ neurons in the hidden layer.
We consider two variants of the \emph{\yeom}~\cite{yeom2017privacy}: one that knows the true loss distribution and therefore chooses the optimal decision threshold (\optimal) and one that uses the average training loss as decision threshold (\average).
We remark that we do not modify these attacks.
The attacks only get black box query access to the model $a$ (with confidence scores), the training loss (log loss attack), public data following $\mathcal{D}$ (shadow model attack), and the sample $z$. That is, they do not use all the information that we allow the attacker in Algorithm~\ref{alg:ExpJoint}. Specifically, we omit the set size $n$ and the training algorithm $A$.
The attacks were not designed to exploit dependencies between training samples and the adversary never explicitly learns these dependencies.
Furthermore, in each experiment, we are careful not to include attributes that can trivially identify the mixture component that a sample was drawn from (we clarify this below).

\paragraph{Implementation}
We used Python 3.8 for our implementation\footnote{Source code available at \url{https://github.com/t3humphries/non-IID-MIA}}, building upon the code of Jayaraman et al.~\cite{jayaramen_code}.
This code includes the implementation of the \yeom~and \shokri, where the latter is based on Shokri et al.~\cite{shokri2016membership}. 
To implement RDP, we use the RDP accountant in TensorFlow Privacy. 

\subsection{Cluster-based Dependencies}\label{sec:cluster}

\begin{figure*}
	\centering
		\includegraphics[width=\linewidth]{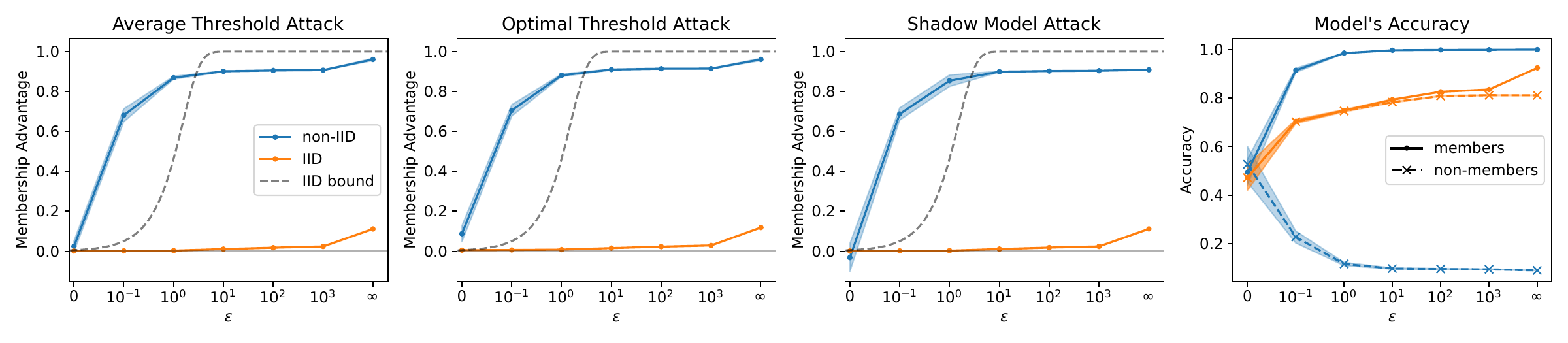}
	\caption{Results in \adult~dataset with cluster-based dependencies.}\label{fig:adult-cluster}
\end{figure*}

\begin{figure}
	\centering
		\includegraphics[width=\linewidth]{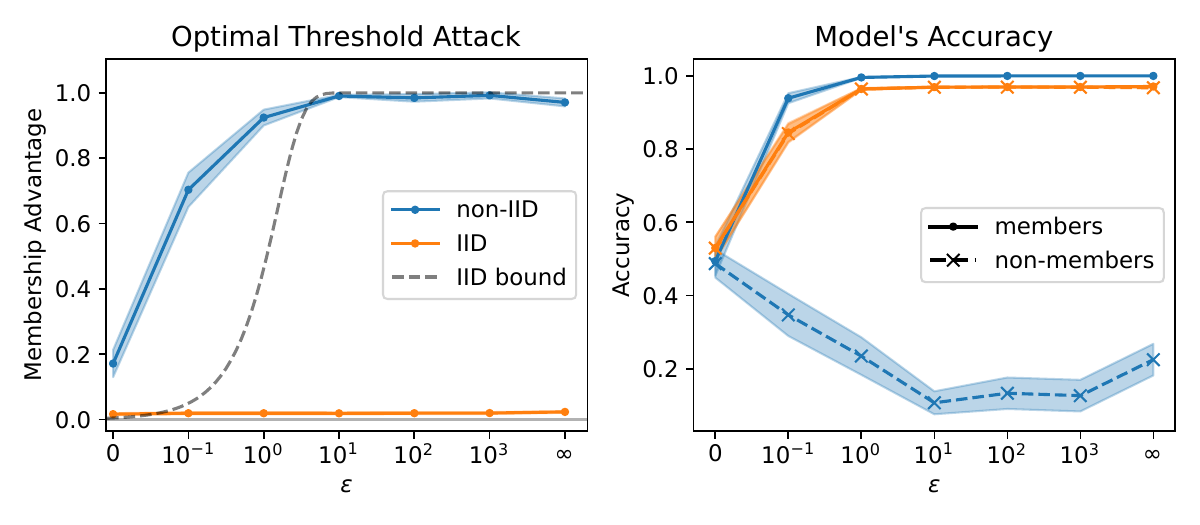}
	\caption{Results in \compas~dataset with cluster-based dependencies.}\label{fig:compas-cluster}
\end{figure}

We begin by studying to what extent data dependencies can affect MIAs.
To do this, we artificially create two sets $D_1$ and $D_2$ that have samples whose features lie in different regions of the feature space.
A similar idea was explored in previous work~\cite{nagarajan2021understanding} by adding spurious perturbations to the training data that caused a large discrepancy between training and testing accuracy.
Instead of modifying existing attributes, we take a real-world dataset and run a $k$-means clustering algorithm with $k=2$ to find clusters of samples that lie in different regions of the feature space.
We run the clustering algorithm independently for all the samples of each class, and add all samples in cluster $i$ to the set $D_i$ ($i\in\{1,2\})$.
Intuitively, $D_1$ and $D_2$ represent mixture components where the characteristics used to distinguish the classes are significantly different.

We use \adult~with $n=10\,000$ and \compas~with $n=2\,000$ in this experiment.
We evaluate what happens when we use $D_1$ to sample the members and $D_2$ for the non-members, and vice-versa.
We use leftover samples in \adult~(up to $n$ per cluster) to train the shadow models for the \shokri~(see Table~\ref{tab:setup}).
We do not evaluate the \shokri~for the \compas~dataset, since we do not have enough samples to provide the adversary.

We run the clustering algorithm once to define $D_1$ and $D_2$, and then repeat the creation of member and non-member sets, the training, and the attack evaluation 50 times.
For each of these experiments, we also run the IID counterpart as explained above, i.e., by randomly mixing the member and non-member sets and repeating the training and attack evaluation.

Figure~\ref{fig:adult-cluster} shows the average advantage of the \average, \optimal, and \shokri~(shaded areas are the $95\%$ confidence intervals for the mean) versus the privacy level $\epsilon$.
We only show the results for the case when members are sampled from $D_1$ and non-members from $D_2$ (the results for the opposite scenario were very similar).
For the value $\epsilon=\infty$, we simply use non-private SGD.
For reference, we also plot our membership advantage bound \eqref{eq:newbound},
which is only guaranteed to hold in the IID scenario.
The rightmost plot includes the average classification accuracy of the target model over the member and non-member sets.

We see that these cluster-based dependencies significantly increase the adversary's performance compared to the IID case.
In general, we observe that the attack performance increases as the privacy level decreases (higher $\epsilon$).
For $\epsilon\geq 1$ we see that the attack performs alarmingly well, close to perfect advantage.
Our results also confirm that the advantage bound does not hold in non-IID cases.

The classification accuracy gives insight into the performance of the threshold attacks.
Recall that these attacks classify all points with logloss \emph{below} a threshold as members. 
The high discrepancy between the loss of members and non-member samples makes it easier for these attacks to identify their membership correctly.
Another interesting observation is that, in Figure~\ref{fig:adult-cluster}, the classification accuracy remains constant after $\epsilon> 1$ but the non-IID advantage of the \yeom~further increases when $\epsilon=\infty$.
This corroborates the observation of Yeom et al.~\cite{yeom2017privacy} that overfitting is not the only factor at play in the success of the attack.

In Figure~\ref{fig:compas-cluster}, we show the results of the \optimal~for the \compas~dataset. 
These results confirm our findings in the \adult~dataset, and even reach an almost-perfect advantage ($\approx 1$) for $\epsilon=10$.
To summarize, we have shown that, using off-the-shelf attacks and datasets, certain dependencies among members/non-members can significantly increase the performance of MIAs.
The advantage we observe not only crosses the DP bound for the IID case, but can reach the maximum value of $1$ in the worst case.

\subsection{Dependencies Caused by an Attribute Bias}\label{sec:att}
We have seen that the cluster-based dependencies can yield extremely high levels of membership advantage.
However, a dependence this strong might not occur in practice.
In this section, we consider a more realistic and interpretable dependency, based on an attribute in the dataset, that we call an \emph{attribute} bias.
We build two experiments using the \adult~dataset: one using the gender attribute (there are only two genders in this dataset, \texttt{`Male'} and \texttt{`Female'}), and another one using the education level (we consider that samples with at least \texttt{`some-college'} education are ``high education'', and all the others are ``low education''). 
For each experiment, we use member and non-member sets with $n=10\,000$ samples, and a bias value $p\in[0, 1]$ representing the proportion of the training set that has the particular attribute value.

We generate the sets $D_1$ and $D_2$ as follows.
We use the gender attribute as an example, but the set generation works similarly for the education attribute.
First, we generate $D_1$ by taking \mbox{$\lceil p\cdot n\rceil$} samples
at random from the dataset with the gender attribute \texttt{`Male'}, and the remaining samples have the \texttt{`Female'} attribute.
We build $D_2$ by taking an even number of samples with each gender attribute value (i.e., $5\,000$ \texttt{`Male'} and $5\,000$ \texttt{`Female'}).
We delete the gender attribute from all samples after generating the sets, so that the adversary cannot use the gender directly in identifying members and non-members.
As before, we use remaining data samples to train the shadow models for the \shokri, ensuring we do not give the adversary more than $n$ samples per attribute value.
Since our goal here is to investigate training set bias, we train with the biased set ($D_1$) and vary the bias level $p$.
The non-members ($D_2$) have an equal number of samples from each attribute value.
We repeat the set generation, training, and evaluation process 50 times for each of the values of the bias $p$ that we test.

\begin{figure*}
	\centering
		\includegraphics[width=\linewidth]{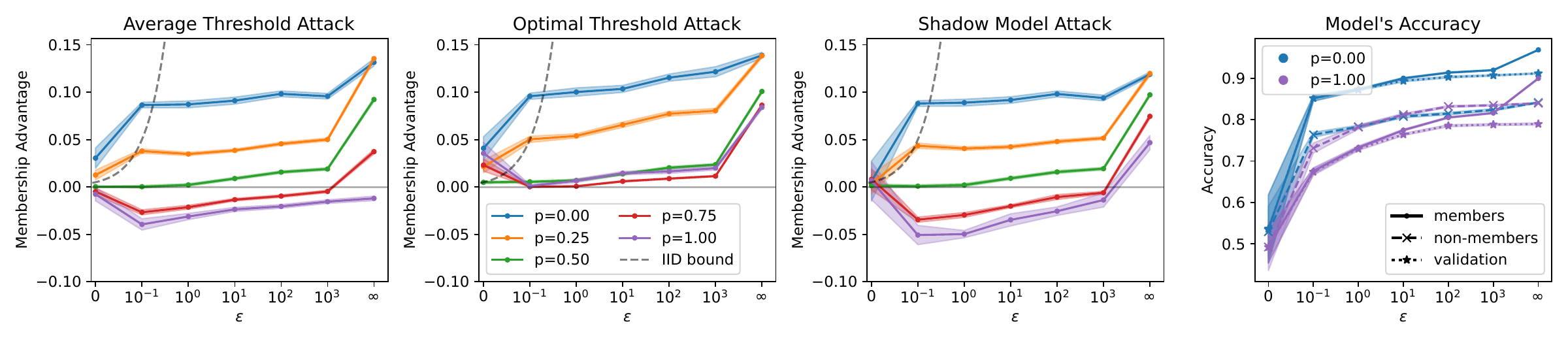}
	\caption{Results in \adult~dataset with gender bias ($p$ determines the proportion of males in the member set).}\label{fig:adult-gender}
\end{figure*}

Figure~\ref{fig:adult-gender} shows the average advantage of the \average, \optimal, and \shokri~for the gender attribute bias, versus the privacy level $\epsilon$.
Each colour represents the performance for a different training set bias, as shown in the legend.
We show only some of the values of the bias $p$ for the model's classification accuracy, for readability.

\begin{figure}
	\centering
		\includegraphics[width=\linewidth]{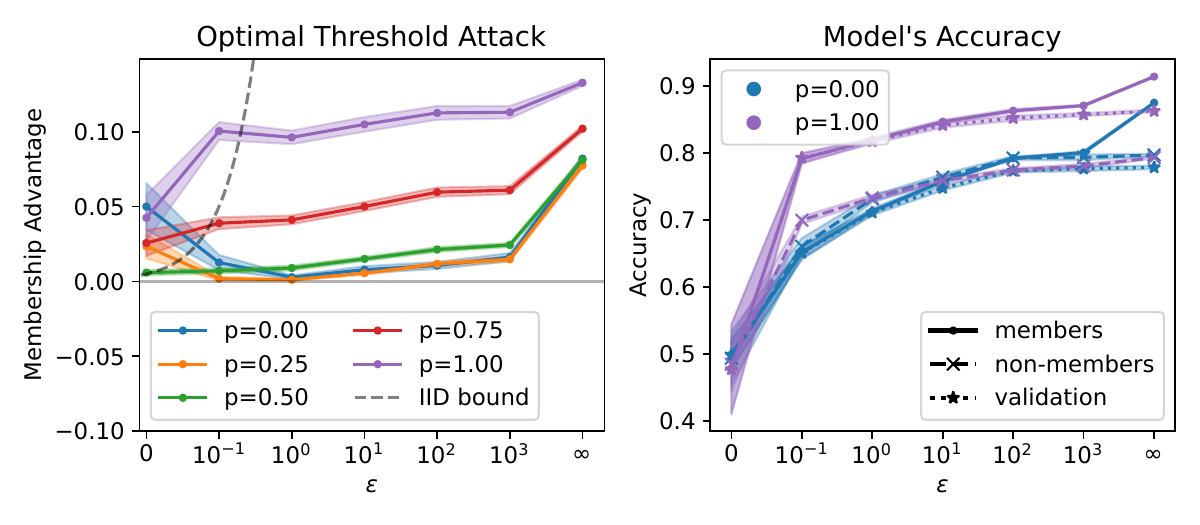}
	\caption{Results in \adult~dataset with education bias ($p$ determines the proportion of individuals with low education in the member set).}\label{fig:adult-numedu}
\end{figure}

We note that, when $p=0.5$, both the member and non-member sets have the same number of samples from each attribute value, so this can be considered an instantiation of the IID experiment.
Training bias ($p\neq 0.5$) can significantly increase the membership advantage, and can even have more impact than the privacy level $\epsilon$.
In fact, we see that the adversary achieves similar advantage in a model trained on unbiased data ($50\%$ males) without privacy ($\epsilon=\infty$), than in a model trained on highly biased data ($0\%$ males) with a high privacy level ($\epsilon=0.1$).
For bias levels $p\geq 0.75$, the membership advantage can be negative.
The classification accuracy gives insight into this effect for the \average. 
When the training set consists of all male samples ($p=1.00$), we see that the classification accuracy for non-members is higher than for members.
This means that the classification task (salary prediction) is likely easier for predominately female samples than is on male samples (even when trained on males only).
This misleads the \yeom~(which classifies all samples with low loss as members) causing it to perform poorly (negative $\Adv$) when the training set is male-dominant.

In this experiment, we also consider the classification accuracy of a validation set that has the same bias as the training set.
That is, we sample $2\,500$ data points that are disjoint from the training set with the same attribute bias as the training set.
We see that the target models show very similar member and validation set accuracy, especially for low values of $\epsilon$.
This suggests that, to a data owner who only has access to biased data, the model appears to generalize well.
Even if the data owner runs a tool such as ML Privacy Meter~\cite{murakonda2020ml} to check their model, they might underestimate the vulnerability of their model.

We plot the results using the education split for the \optimal~only in Figure~\ref{fig:adult-numedu}.
The results are qualitatively similar to the ones with the gender split.
In this case, predicting the income of individuals with low education levels is generally easier than on individuals with high education.
Even when training with high-education samples only ($p=0$), the model's accuracy over non-members (samples with low education) is higher than in members, which causes the MIAs to perform poorly for low values of $p$.
However, when $p$ is large (majority of low-education samples in the training set), we see that the MIAs perform better, breaking the DP bound for the IID case.

\subsection{Inherent Dataset Dependencies}\label{sec:dataset}
We have seen that dependencies play a significant role in the success of membership inference attacks.
A natural question to ask is whether such dependencies occur in practice.
While there are no public examples of MIAs in the wild, we study naturally occurring dependencies found in off-the-shelf datasets.
We do this by either finding different datasets for a similar classification problem or splitting existing datasets based on an attribute that would define a training set in practice, such as the location/institution where the samples were collected.

\paragraph{Hospital data: heart condition detection}
For our first dataset-based dependency experiment, we use the \heart~datasets, which consist of four different datasets (Cleveland, Hungary, Switzerland, and the VA Long Beach), each containing the samples from one hospital/institution.
The classification problem is detecting whether or not a patient has a heart condition (the classification problem is binary).
We use all the samples from each database to define the $\{D_i\}_{i\in[4]}$ sets, and evaluate the performance of MIAs when we train with one database and all the samples from the other three databases are non-members.

Figure~\ref{fig:heart} shows the results when training with each of the hospitals.
We see that there is certain property present within each hospital, since the membership advantage when we train with one hospital dataset (non-IID) is significantly higher than when we shuffle all datasets (IID).
We note that the advantage in the IID case slightly breaks the bound.
We posit that this is because the RDP accountant assumes batches are sampled without replacement and each batch is sampled independently.
In practice, and in our implementation, batches are non-overlapping.
This difference in sampling would be particularly noticeable when the training set is small (as overlaps are more probable), which explains why the advantage in the IID case is slightly above the bound.

\begin{figure*}
	\centering
		\includegraphics[width=\linewidth]{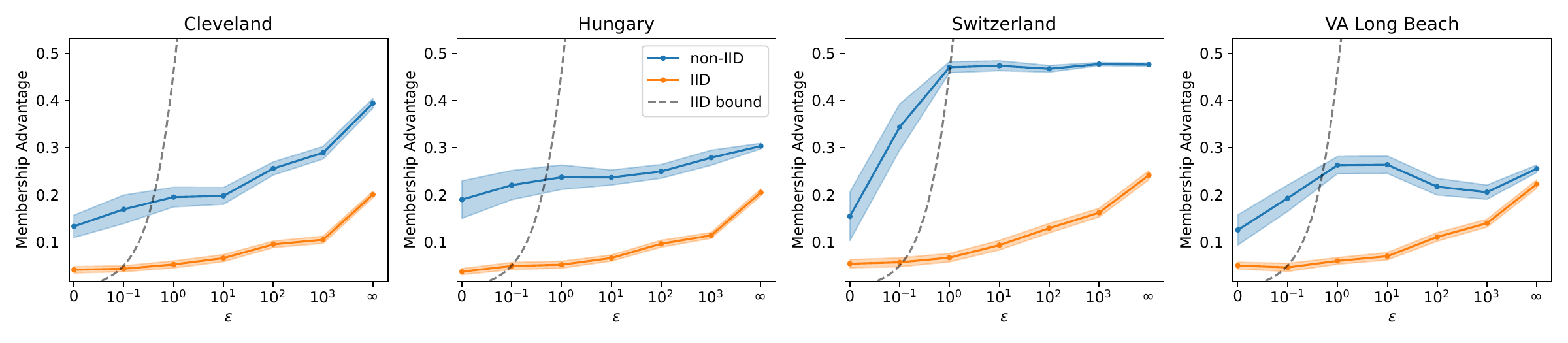}
	\caption{Performance of the optimal threshold attack in the \heart~dataset when members belong to a particular database (hospital/institution), and non-members are taken from all other databases.}\label{fig:heart}
\end{figure*}

\paragraph{School data: students proficiency detection}
Next, we consider the \students~dataset, which contains data about students from two Portuguese schools (Gabriel Pereira and Mousinho da Silveira), and the classification problem is to predict whether the student passes the course.
We use all $423$ samples from Gabriel Pereira school ($D_1$) as the training set, and all $226$ samples from Moushinho da Silveira school ($D_2$) as non-members.
Figure~\ref{fig:students} shows the results.
As before, we see that there is a common property in the students of each school that helps MIAs succeed with higher probability than in the case where we shuffle the data from both schools (IID).

\begin{figure}
	\centering
		\includegraphics[width=\linewidth]{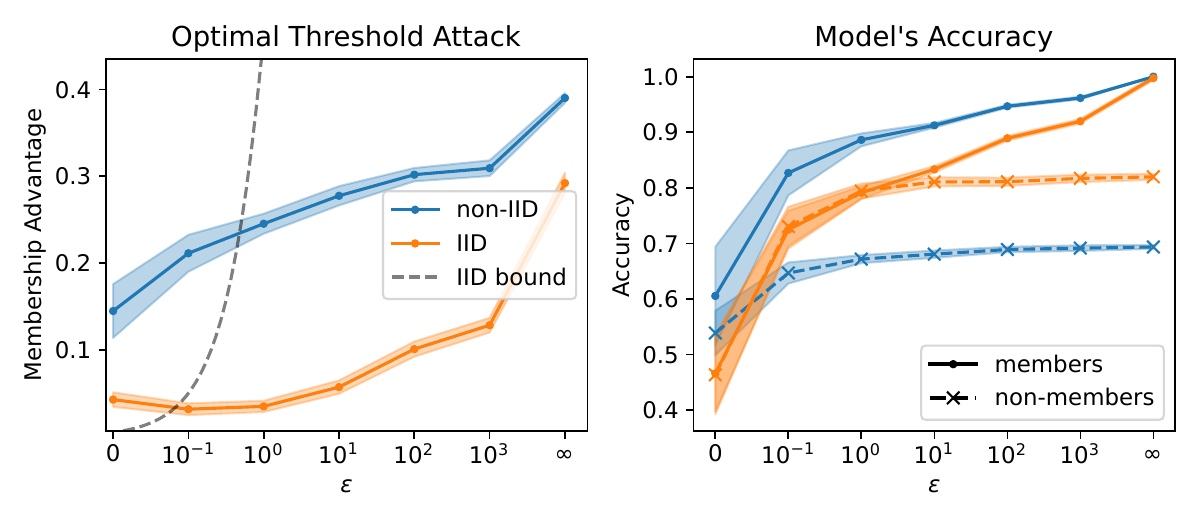}
	\caption{Results in \students~dataset when members are from the Gabriel Pereira school, and non-members from Mousinho da Silveira school.}\label{fig:students}
\end{figure}

\paragraph{Hospital data: large dataset with a multiclass problem}
Next, we use the \texas~dataset, which is a larger dataset ($350\,280$ samples) with the 100-class problem proposed by Shokri et al.~\cite{shokri2016membership}.
Each sample contains an attribute specifying the health region where it was collected.
We assign samples from health region number 3 to the set $D_1$, and assign all other samples to $D_2$.
For each run of this experiment, we randomly sample $n=10\,000$ samples from $D_1$ to use as members/training, and use the same number of samples from $D_2$ as non-members (we restrict the sizes for computational reasons).
We give $n$ leftover samples from each set to the adversary to train the shadow model attack.
We note that such split could naturally occur in practice (e.g., the local authorities of health region 3 decide to train a machine learning model using data for all the hospitals in that region).
Figure~\ref{fig:texas} shows the results.
In this case, the difference between the non-IID case (training with health region 3) and IID case (training with samples from all health regions) is smaller than in previous experiments.
We can see that the classification accuracy is only slightly higher for members in the non-IID than in the IID case (the opposite is true for non-members), which explains the more modest advantage increase due to the training set dependencies.
However, the relative increase in advantage between the IID and non-IID experiments is significant, especially in the threshold attack when $\epsilon\in[1,100]$.

\begin{figure*}
	\centering
		\includegraphics[width=\linewidth]{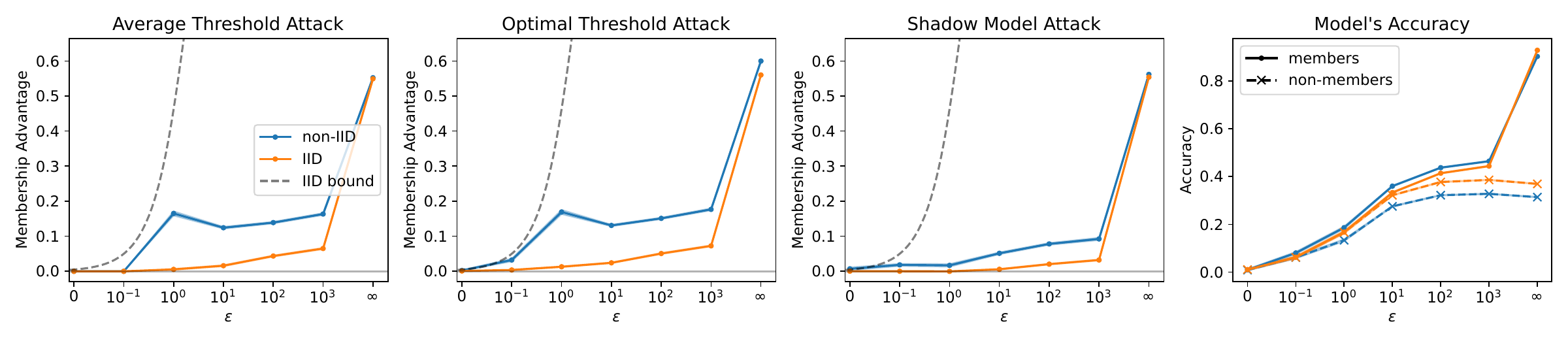}
	\caption{Results in \texas~dataset when members are from hospitals in region 3, and non-members are from any other region.}
	\label{fig:texas}
\end{figure*}

\paragraph{Census data: dataset curation dependencies}
Finally, we consider the dependencies that stem from techniques a researcher or institution follows when curating a dataset.
We use both \adult~and \census~datasets, which have been extracted from \emph{the 1994 US Census data}, but were curated into ML training data by different researchers.
Specifically, each research group extracted a different number of records: the \adult~dataset consists of $48\,842$ records extracted using certain binary rules to filter out records, and the \census~ dataset contains $299\,285$ records, but does not specify any filters.
Both datasets consider the classification task of predicting if an individual's income is above $50\,000$ USD (binary classification).
We take 10 attributes that both datasets have in common. 
The set $D_1$ is the data from \census, and $D_2$ contains the samples from \adult~(we remove duplicates so that $D_1\cap D_2 = \emptyset$).
As in the previous experiment, we set $n=10\,000$ and sample $n$ data points from $D_1$ as members and $n$ from $D_2$ as non-members.\footnote{We remark that, since both of these datasets are extracted from the same source (the US Census), there a possibility that multiple unique records can describe the same individual. However, we consider membership inference at the data point level for this experiment.}
We further draw $n$ samples from each set to train the shadow model attack.
We show the results in Figure~\ref{fig:census}.

We see that, even though the census data comes from the same source, the curation of these datasets introduces dependencies that can certainly help MIAs.
The performance across all attacks is considerably higher for the non-IID setting, and DP-SGD does not seem to have a big effect at preventing this leakage. We note that the noise that DP-SGD adds during training decreases with $n$; this explains why, in this dataset, even small values of $\epsilon$ are not effective at stopping the MIAs in presence of data dependencies.

\begin{figure*}
	\centering
		\includegraphics[width=\linewidth]{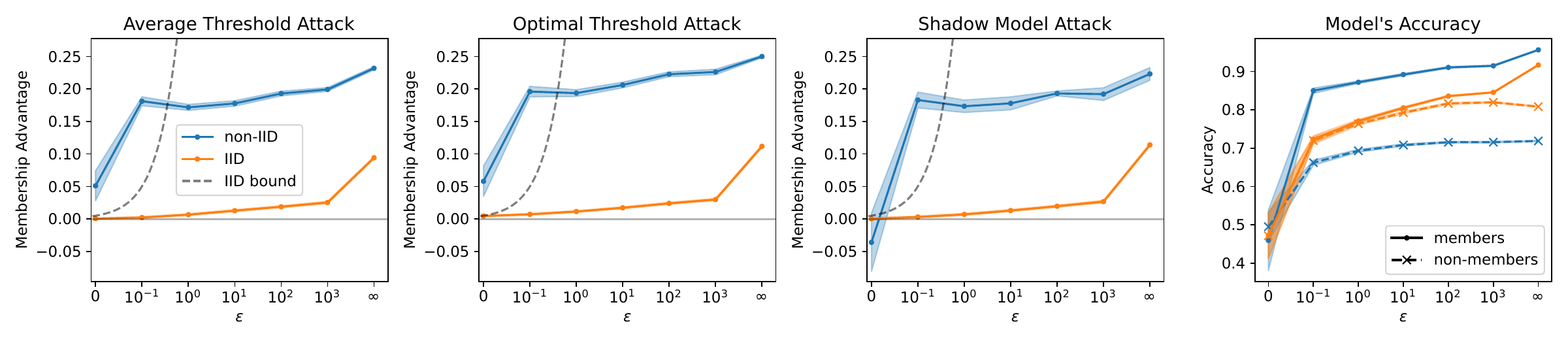}
	\caption{Results when training a model with \census, where non-members are from \adult~dataset. }\label{fig:census}
\end{figure*}

\section{Discussion: Connection to Property Inference}
\label{sec:disc}
In this paper, we investigate the effects of data dependencies on membership inference attacks.
We show that \emph{statistical differences} between members and non-members can improve the performance of existing off-the-shelf MIAs.
One can interpret these data dependencies as a \emph{property} that allows distinguishing members from non-members.
This property can be tangible, such an existing attribute that already distinguishes members from non-members, or a non-interpretable and arbitrarily complex combination of attributes.

Property Inference Attacks (PIAs)~\cite{ateniese2013hacking,Ganju_2018,mahloujifar2022property} aim at revealing a property of the training set that the data owner might have wanted to keep secret.
In contrast, recall that MIAs aim to infer whether an individual sample is a member of the training set.
Given their definitions, MIAs and PIAs are very different attacks, with different goals altogether.
However, by defining the data dependencies among members as a property, one could devise a \emph{new MIA} that first runs a PIA to reveal a property of the training set and then checks the presence/absence of this property in the target sample to decide on its membership.
This \emph{MIA-via-PIA} draws connections between membership and property inference in the non-IID case: a successful PIA could lead to a successful MIA.
In this section, we first discuss the feasibility of an MIA-via-PIA, then we discuss the privacy implications of this attack and, finally, explain the differences between an MIA-via-PIA and the MIAs we evaluated in Section~\ref{sec:eval}.

\paragraph{Feasibility of MIA-via-PIAs}
The first step in an MIA-via-PIA is finding a property that separates members from non-members.
PIAs receive a description of the property to be inferred as an input, e.g., as two complementary hypotheses~\cite{ateniese2013hacking,Ganju_2018} or as a binary function of the samples' attributes~\cite{mahloujifar2022property}.
Generally speaking, finding the right property description that separates members from non-members is challenging.
In the theoretical MM membership experiment (Algorithm~\ref{alg:ExpJoint}), the adversary has access to $\distn$, a joint distribution made up of mixture components.
Thus, in this setting, the adversary has a description of a set of properties $[K]$ representing the various mixture components.
The adversary could train a meta-classifier~\cite{Ganju_2018} that receives a trained model as input, and infers which mixture component $\dist_{\hat{k}}$ ($\hat{k}\in[K]$) the training samples followed.
However, in practice, this property inference is likely to be unsuccessful for a number of reasons.
First, the assumption that the adversary receives $\distn$ is very strong and unlikely to occur in practice.
Second, the number of mixture components $K$ is likely very large or might not even be a discrete set (e.g., in the case of hospital data, $K$ could be the total number of ``possible hospital distributions'' that exist, which might not even be finite).
This further increases the difficulty of the property inference (estimation of $\hat{k}$).

The second step in an MIA-via-PIA is to use the property $\hat{k}$ to attempt to infer the membership of a sample.
For example, given the target sample $z$, the attacker could check its likelihood of being a member by checking if $\Pr(z|z\sim \dist_{\hat{k}}) > \Pr(z|z\sim \dist_{k'})$ for all $k'\neq \hat{k}$.
Performing this step can be highly difficult for different reasons.
First, we note that MIAs are a \emph{statistical game}: even when the adversary correctly estimates the property, it is not always possible to correctly guess the membership of the target sample.\footnote{Note that 100\% accuracy in membership inference is not possible, since the number of possible datasets typically exceeds the number of possible model parameters, and thus at least two different datasets may result in the same model, by the pigeonhole principle.}
Second, the mixture components' support is very likely overlapping such that, given a particular data point $z$, there might be several mixture components from which the sample might have come from (i.e., $\Pr(z|z\sim \dist_{\hat{k}})\approx\Pr(z|z\sim \dist_{k'})$ for at least one $k'\neq k$).
In these cases, the MIA-via-PIA will likely fail.
Note that this will often be the case when $K$ is large, which is likely to happen in practice, as we argue above.

\paragraph{Privacy Violations}
Next, we discuss the meaning of privacy violations under an MIA-via-PIA. 
While PIAs violate the privacy of the training set as a whole, MIAs violate the privacy of individual members.
This questions whether the privacy violation of a successful MIA-via-PIA is an individual or a group privacy violation.
The MIA-via-PIA approach decides that any sample $z$ that has the property is a member, regardless of whether or not that sample was truly in the training set.
In that case, it would seem that correctly guessing membership is independent of the actual membership of the sample, and thus is not an individual privacy violation but a group privacy violation.
However, we want to reiterate that MIA is a \emph{statistical game}.
In situations where there is a property that accurately distinguishes members from non-members, an MIA-via-PIA is a smart strategy from a Bayesian inference point of view, that could increase the adversary's odds of winning the statistical membership game.
Leveraging the property \emph{does not change the meaning} of the privacy violation from an individual to a group violation.
We note that leveraging dependencies to improve the accuracy of statistical attacks is common in many privacy areas, such as location privacy (spatio-temporal correlations between a user's locations can help the adversary's estimation of these user's individual locations~\cite{shokri2016privacy}), database privacy (correlations between client queries aid finding out the underlying secret keyword of each individual query~\cite{oya2022ihop}), etc.
These privacy violations are generally considered as individual, and not group, violations.

\paragraph{MIAs under data dependencies}
While we have argued that an MIA-via-PIA is a valid MIA that can be successful, a PIA is not necessary to build a successful MIA under statistical dependencies.
Our theoretical experiments follow prior work~\cite{yeom2017privacy} and consider an informed adversary with knowledge of $\distn$.
We have shown, theoretically, that DP bounds do not hold against such adversary (Section~\ref{sec:boundsdonothold}).
However, in practice, assuming the adversary knows $\distn$ might not be realistic.
Our empirical evaluation considers a particular case of the theoretical membership experiment where the adversary simply does not use (and does not know) $\distn$.
We evaluate popular off-the-shelf attacks and see that these attacks achieve high accuracy under data dependencies, exceeding the DP bounds that hold for the IID scenario.
This shows that knowledge of $\distn$ (which could enable a MIA-via-PIA) is not necessary to exploit the data dependencies and break the IID DP bounds.

Regardless of whether one uses an MIA-via-PIA in non-IID settings, it is a fact that data dependencies affect the performance of off-the-shelf MIAs (our empirical evaluation demonstrates this) and thus MIAs in the wild will inevitably (and perhaps unintentionally) also be affected by this.
We believe that MIA advantage in the wild will be closer to the experiments that we have seen in this paper, and we maintain that future evaluations of MIAs should consider realistic member/non-member splits, since we have seen that random shuffling underestimates the privacy risks of membership inference.
Finding techniques that help mitigate the leakage that stems from data dependencies, which has been ignored by past work, is an interesting topic for future work.

\section{Related Work}
Chatzikokolakis et al.~\cite{Chatzikokolakis_2020} provide a similar theoretical bound for the case of pure differential privacy. 
However, since DP-SGD uses the moments' accountant, which gives approximate DP, our bound is more useful in practice.
We remark that the goal of this paper is not to be competitive with state-of-the-art attacks in MIA~\cite{jayaraman2020revisiting, song2021systematic,sablayrolles2019white, long2020pragmatic, watson_21, Ye_21, carlini2022membership}. 
Instead, we show the amplification effect of data dependencies on existing, well-known attacks by Shokri et al.~\cite{shokri2016membership} and Yeom et al.~\cite{yeom2017privacy}.
Showing that a single MIA breaks the upper bound for a single attack is enough to prove our point.

There have been several works that investigate the performance of MIAs under different assumptions.
Nasr et al.~\cite{nasr_comprehensive_2018} provide a comprehensive privacy analysis of MIAs under several white-box conditions and design an inference attack that targets vulnerabilities in the stochastic gradient descent algorithm.
Recent work by Jagielski et al.~\cite{jagielski2020auditing} improves MIA success by developing a novel data poisoning attack.
Nasr et al.~\cite{nasr2021adversary} show experimentally that the strong adversary (see Alg.~\ref{alg:nasr}) can reach the theoretical DP upper bounds. 
These works all show stronger attack success than previously reported. However, they achieve this performance by giving the adversary increased (often unrealistic) capabilities while only considering the case of IID data. We instead focus on the weaker black box model of attacker capabilities and vary the data distribution (to be more realistic) to strengthen the attacks.

Kulynych et al.~\cite{kulynych2022disparate} demonstrate that data samples from underrepresented communities are more vulnerable to MIAs.
Bagdasaryan et al.~\cite{bagdasaryan2019differential} find that DP-SGD results in a disproportionate amount of accuracy loss for vulnerable groups.
These works complement our findings that MIAs are stronger when the training data has dependencies.
Xiong et al.~\cite{Xiong2022} study the risks of non-IID data in the federated learning setting. 
Hu et al.~\cite{hu2021source} design a new attack to uncover the source of data points (such as which hospital contributed the data point) in the federated setting. 
This work complements ours by showing an increase in attack performance when considering clients with different data distributions. 
Similarly, Gomrokchi et al.~\cite{gomrokchi2021did} study the problem of MIAs against reinforcement learning and find that temporal correlation between training trajectories plays a major role in attack success.

We recall that property inference attacks aim to discover hidden properties in a target classifier's training dataset, such as the proportion of members in the dataset that are students~\cite{ateniese2013hacking}.
It is well known that DP does not protect against property inference attacks~\cite{ateniese2013hacking,mahloujifar2022property}.
As discussed in Section~\ref{sec:disc}, our goal is not to conduct a property inference attack or be competitive with state-of-the-art attacks in this field~\cite{ateniese2013hacking, Ganju_2018, mahloujifar2022property, zhang2021leakage,parisot2021property}.
Instead, we show the effects of datasets that are non-IID on MIA attacks using properties as one way to introduce a dependency in the datasets.

While our work is the first to investigate the effects of dependent data on MIAs against Neural Networks trained with DP-SGD, other works have studied the limitations of DP under data dependencies, as summarized by Tschantz et al.~\cite{tschantz_sok_2020}. 
Kifer and Machanavajjhala~\cite{kifer_no_2011} claim that DP mechanisms do not always limit participation inference, and later propose the Pufferfish framework~\cite{pufferfish}, which they use to show that data correlations can cause additional information leakage compared to the IID case.
Li et al.~\cite{li2013membership} explain that DP implies membership privacy only when data is mutually independent.
Liu et al.~\cite{liu2016dependence} empirically study the effect of data dependencies on inference attacks against the Laplace mechanism. 
They show that an adversary with knowledge of these dependencies can violate DP bounds that hold in the IID scenario.
Although our conclusions are similar (besides the fact that we work in the ML domain), an important difference between this work and ours is that the attacks we evaluate do not use knowledge about these data dependencies.
Instead, we show that existing attacks~\cite{shokri2016membership, yeom2017privacy}, which were not designed to exploit data dependencies, also benefit from them.

\section{Conclusion}
We show that current MIA experiments and evaluations hinge on the restrictive assumption that members and non-members are IID data samples.
We studied the performance of MIAs when there are statistical dependencies among the training set samples, and found that they pose a far greater threat than previously reported.
We show that current state-of-the-art MIAs can achieve near-optimal performance when we introduce artificial dataset dependencies, but even naturally occurring dependencies can increase MIA performance compared to the IID scenario.
We show theoretically and empirically that, when members and non-members are not IID, the previous theoretical bounds of DP do not apply. 
Our work demonstrates that data dependencies should be taken into account when studying MIA performance, as they are a realistic assumption that, if ignored, can lead to a significant underestimation of the privacy risk that MIAs pose.

\section*{Acknowledgements}
	We gratefully acknowledge the support of NSERC for grants RGPIN-05849, RGPIN-03858, IRC-537591, and the Postgraduate Scholarship-Doctoral/ CGS-M programs. We further acknowledge the Royal Bank of Canada, and the Waterloo-Huawei Joint Innovation Laboratory for funding this research.
    This research was undertaken, in part, thanks to funding
from the Canada Research Chairs program.
	This work was partially funded by an Ontario Graduate Scholarship (OGS).
	This work benefited from the use of the CrySP RIPPLE Facility at the University of Waterloo.

\bibliographystyle{IEEEtran}
\bibliography{refs}

\appendix
\subsection{Proof of Theorem~\ref{theo:newIID}}
\label{sec:app}

We prove that the bound in Theorem~\ref{theo:new} holds when members and non-members are \emph{statistically exchangeable}.
Then, we use this to show that this bound holds in  Yeom et al.'s~\cite{yeom2017privacy} membership experiment ($\ExpIID$), even when members and non-members are sampled without replacement. 

\begin{defn}[Statistical Exchangeability in MIAs]
	\label{def:ex}
	A joint distribution $\distn$ for $n$ member samples $\{z_1,\dots,z_n\}$ and a non-member sample $z_{n+1}$ meets statistical exchangeability if the joint distribution of the samples $\{z_1,\dots,z_{n+1}\}\sim\distn$ does not depend on the sample order in the sequence.
	Formally, for any permutation $\sigma: [n+1]\to[n+1]$,
	\begin{equation}
		\Pr(z_1,\dots,z_{n+1})=\Pr(z_{\sigma(1)},\dots,z_{\sigma(n+1)}).
	\end{equation}
\end{defn}

We define a new membership experiment that generalizes $\ExpIID$ to the case where members and non-members are statistically exchangeable:

\begin{algorithm}
	\begin{algorithmic}[1]
	\Procedure{$\ExpEX$}{$\Att, A, n, \distn$}  \Comment{$\distn$ exchangeable.}
	\State Sample $\{z_1,z_2,\dots,z_{n},z'\}\sim\distn$;
	\State Choose $i\sim[n]$ uniformly at random;
	\State Set $z=z_i$ and $\tilde{S}=\{z_k\}_{k\neq i}$;
	\State Choose $b\sim\{0,1\}$ uniformly at random; \label{line:att0}
	\If{$b=0$} \label{line:if0}
		\State Train $a=A(\tilde{S}\cup\{z\})$;
	\Else
		\State Train $a=A(\tilde{S}\cup\{z'\})$;
	\EndIf \label{line:if1}
	\State Return 1 if $\Att(z, a, n, A, \distn)=b$; else 0. \label{line:att1}
	\EndProcedure
	\end{algorithmic}
	\caption{Exchangeable Membership Experiment}
	\label{alg:ExpEX}
\end{algorithm}

We first prove the following result:
\begin{theorem}[New Membership Advantage Bound]
	\label{theo:newEX}
	Let $A$ be an ($\epsilon$,$\delta$)-DP learning algorithm. 
	Then, for all attacks $\Att$, training set sizes $n$, and statistically exchangeable joint distributions $\distn$, the membership advantage in $\ExpEX$ satisfies
	\begin{equation}
	 \Adv(\Att,A,n,\dist)\leq (e^\epsilon-1+2\delta)/(e^\epsilon+1)\,.
	\end{equation}
\end{theorem}

\begin{proof}
First, consider a variation of $\ExpEX$ that replaces the call to the attack $\Att(z,a,n,A,\dist)$ in $\ExpEX$ (Alg.~\ref{alg:ExpEX}, line~\ref{line:att1}) with a more informed attack that receives both $z$ and $z'$, as well as $\tilde{S}\doteq S\setminus \{z\}$.
We call this new experiment $\ExpAltt$.
We write the call to the attack as $\Att(a, z, z', \tilde{S}, A)$: we disregard $n$ as an input, since it can be inferred from $\tilde{S}$, and $\distn$, since the adversary receives all variables sampled from it and, due to the statistical exchangeability property, $\Pr(\tilde{S},z,z')=\Pr(\tilde{S},z',z)$, and thus $\distn$ does not provide any information that helps distinguish a member from a non-member.
Let $\AdvAlt$ (resp.~$\AdvAltt$) be the maximum advantage of any attack in $\ExpEX$ (resp.~$\ExpAltt$).
Notice that, since the attack in $\ExpAltt$ receives strictly more information than the attack in $\ExpEX$, then $\AdvAlt\leq\AdvAltt$.

Finally, notice that the process between lines~\ref{line:att0} and \ref{line:att1} in Algorithm~\ref{alg:ExpEX} (resp.~$\ExpAltt$) is exactly the same as $\ExpSA$ in Algorithm~\ref{alg:nasr}.
In other words, we can rewrite $\ExpAltt$ as a new experiment, $\ExpAlttt$, shown in Algorithm~\ref{alg:ExpAlttt}.
We have proven that the membership advantage of $\ExpSA$ is upper-bounded by \eqref{eq:newbound}, regardless of how $z$, $z'$, and $\tilde{S}$ were generated.
Therefore, the same bound holds for $\ExpAlttt$.
Since $\ExpAlttt$ is equivalent to $\ExpAltt$ and $\AdvAlt\leq\AdvAltt$, it holds that our bound in \eqref{eq:newbound} also holds for $\ExpEX$.
\end{proof}

We note that $\ExpIID$ can be written as $\ExpEX$ for the particular case where $\distn=\dist^n$.
This proves Theorem~\ref{theo:newIID}.
The variation of $\ExpIID$ that samples without replacement also results in a joint distribution $\distn$ that satisfies statistical exchangeability, and thus the bound also holds in this case.
Finally, we note that the joint distribution $\distn$ in the mixture model experiment $\ExpNoIID$ does \emph{not} satisfy the statistical exchangeability property, and thus we cannot prove the bound holds following this approach (in Section~\ref{sec:gen} we show why such a bound does not hold generally in the non-IID case).


\begin{algorithm}
	\begin{algorithmic}[1]
		\Procedure{$\ExpAlttt$}{$\Att, A, n, \distn$} \Comment{$\distn$ exchangeable.}
		\State Sample $\{z_1,z_2,\dots,z_{n},z'\}\sim\distn$; 
		\State Choose $i\sim[n]$ uniformly at random;
		\State Set $z=z_i$ and $\tilde{S}=\{z_k\}_{k\neq i}$;
		\State Return $\ExpSA(\Att^*,A,\tilde{S},z,z')$.
		\EndProcedure
	\end{algorithmic}
	\caption{Exchangeable Membership Experiment (v2)}
	\label{alg:ExpAlttt}
\end{algorithm}

\subsection{Tightest Bound}
\label{app:tightest}
We prove that our new bound ($(e^\epsilon-1+2\delta)/(e^\epsilon+1)$) is tighter than the bound by Yeom et al.~\cite{yeom2017privacy} ($e^\epsilon-1$) and Erlingsson et al.~\cite{erlingsson_that_2020} ($1 - e^{-\epsilon} ( 1 - \delta)$).
First, we note that Yeom et al's bound assumes $\delta=0$.
In that case, Erlingsson et al.'s bound is clearly tighter:
$1 - e^{-\epsilon}=({e^\epsilon-1})/({e^\epsilon})\leq e^\epsilon - 1$.

Thus, to prove our bound is the tightest of the three, it suffices to prove it is tighter than Erlingsson et al.'s bound, i.e.,
\begin{equation*}
	\frac{e^\epsilon-1+2\delta}{e^\epsilon+1}\leq 1 - e^{-\epsilon} ( 1 - \delta)
\end{equation*}
To do this, we simply compute the difference between the left and right terms of these inequalities, and perform basic arithmetic operations to show the difference cannot be positive:
\begin{equation*}
	\frac{e^\epsilon-1+2\delta}{e^\epsilon+1}-\left(1 - e^{-\epsilon} ( 1 - \delta)\right)=\frac{(\delta-1)(e^{-\epsilon}+1)}{e^\epsilon+1}\leq0
\end{equation*}
The last inequality comes from the fact that the terms $(e^{-\epsilon}+1)$ and $(e^\epsilon+1)$ are strictly positive, while $(\delta-1)$ is non-positive (since $\delta\leq 1$). \qed

\end{document}